\newcommand{\abs}[1]{\left|#1\right|}
\newcommand{\complex}{\mathbb{C}}
\newcommand{\eps}{\varepsilon}
\newcommand{\ordo}{\mathcal{O}}
\newcommand{\veff}{v_{\text{eff}}}
\newcommand{\FQa}{\mathcal{Q}_\alpha}
\newcommand{\FJa}{\mathcal{J}_\alpha}
\newcommand{\FJab}{\mathcal{J}^\beta_\alpha}
\newtheorem{thm}{Theorem}
\newcommand{\vev}[1]{\left\langle #1 \right\rangle}
\newcommand{\ket}[1]{{\left|#1\right\rangle}}
\newcommand{\bra}[1]{{\left\langle #1\right|}}
\begin{document}
\numberwithin{equation}{section}

\title{  Current operators in Bethe Ansatz and Generalized Hydrodynamics:\\
  An exact quantum/classical correspondence
}
\author{M\'arton Borsi}
\affiliation{Department of Theoretical Physics, Budapest University
of Technology and Economics,\\ 1111 Budapest, Budafoki \'ut 8, Hungary}
\author{Bal\'azs Pozsgay}
\affiliation{Department of Theoretical Physics, Budapest University
  of Technology and Economics,\\ 1111 Budapest, Budafoki \'ut 8, Hungary}
\affiliation{BME Statistical Field Theory Research Group\\
1111 Budapest, Budafoki \'ut 8, Hungary}
\author{Levente Pristy\'ak}
\affiliation{Department of Theoretical Physics, Budapest University
of Technology and Economics,\\ 1111 Budapest, Budafoki \'ut 8, Hungary}

\begin{abstract}
  Generalized Hydrodynamics is a recent theory that describes
  large scale transport properties of one dimensional integrable 
  models.
It is built on the  (typically infinitely many) local conservation laws
present in these systems, and leads to a generalized Euler
type hydrodynamic equation.
Despite the successes of the theory, one
of its cornerstones, namely a conjectured expression for the currents
of the conserved charges in local equilibrium has not yet been
proven for interacting lattice models.
Here we fill this gap, and
compute an exact result for the mean values of current operators in
Bethe Ansatz solvable systems, valid in arbitrary finite volume. Our exact formula has a simple semi-classical
interpretation: the currents can be computed by summing over the
charge eigenvalues carried by the individual bare particles,
multiplied with an effective velocity describing their propagation in
the presence of the other particles. Remarkably, the semi-classical
formula remains exact in the interacting quantum theory, for any finite
number of particles and also in the thermodynamic limit. Our proof is
built on a form factor expansion and it is applicable to a large class
of quantum integrable models. 
\end{abstract}

\maketitle

\section{Introduction}

The description of the collective motion in
many body quantum systems is one of the most challenging
problems in theoretical physics. There are different possible levels for a theoretical
treatment, ranging from the microscopic laws to
various effective theories describing mesoscopic or macroscopic physics.
For large enough systems one expects that 
classical behaviour will emerge, at least for certain observables. 
It is thus important to understand how and under what circumstances the various classical
theories can
be derived from an underlying quantum mechanical motion \cite{Spohn-Book}. 

One such classical theory is hydrodynamics: it is known that many quantum systems admit some kind
of hydrodynamic description on the mesoscopic and/or macroscopic
scales \cite{Spohn-Book,lebowitz-hydro1}. Examples include Bose-Einstein
condensates \cite{BEC-book} or the quark gluon plasma
\cite{quark-gluon-hydro}. Superfluidity is a famous
exotic phenomenon, where frictionless flow is realized due to the
constraints for the decay of excitations into lower energy modes.
Superfluidity has been observed not only for
liquid Helium, but also in ultracold bosonic and fermionic gases
\cite{nature-quantum-revolution}.

An other class of systems with exotic hydrodynamic behaviour is comprised by the
one dimensional integrable models.
In these models there exist  independent conservation laws that constrain
the dynamical processes, the number of which grows at least linearly
with the volume.
As an effect, these models do not thermalize  to standard statistical
physical ensembles.
Instead, the emerging long time steady states can be described by a Generalized Gibbs Ensemble (GGE) that
involves all higher conserved charges of the model
\cite{rigol-gge,rigol-quench-review}.
The conservation laws prevent the decay of
quasi-particle excitations, and this leads to dissipationless and factorized
scattering, which was already demonstrated by experiments \cite{QNewtonCradle}.
Dissipationless
propagation of the collective modes leads to the emergence of ballistic transport and 
non-zero Drude weights (DC conductivity) \cite{Zotos-transport}. 

Generalized Hydrodynamics (GHD) is a recent theory describing large
scale non-equilibrium behaviour in  integrable models
\cite{doyon-ghd,jacopo-ghd} (see also
\cite{benjamin-takato-note-ghd,Doyon-GHD-LM,vasseur-ghd-1,vasseur-ghd-2}). The theory is built on
the local 
continuity equations following from the conservation laws, which lead
to a generalized Euler-type equation describing the ballistic
transport. The GHD provides exact results for the Drude weights
\cite{doyon-ghd-LL-Drude,jacopo-enej-ghd-drude}.
Diffusive corrections to the ballistic transport were also considered in
\cite{doyon-jacopo-ghd-diffusive,jacopo-benjamin-bernard--ghd-diffusion-long},
including an exact computation of the diffusion
coefficients. Remarkably, the predictions of the GHD
have already been confirmed in a concrete
experimental setup \cite{ghd-experimental-atomchip}.

Despite the successes of the GHD, one of the cornerstones of
the theory has
not yet been proven. The works \cite{doyon-ghd,jacopo-ghd}
conjectured an expression for the expectation values of the current
operators in local equilibrium, which is central to the derivation of
the main equations of motion.
Regarding integrable Quantum Field
Theories
a proof was provided in
\cite{doyon-ghd,dinh-long-takato-ghd},
whereas for the spin current of the XXZ model it was proven in
\cite{kluemper-spin-current}.
Nevertheless, for arbitrary current operators in interacting lattice
models or non-relativistic gases (the models most relevant to
experiments) it was completely missing up to now.
It is the goal of this paper to provide a proof of the
conjecture, valid for a wide class of  Bethe Ansatz solvable models.

The problem of the current mean values is also interesting from a
purely theoretical perspective, without the immediate 
application to GHD. A large body of literature has already addressed
equilibrium correlation functions in Bethe Ansatz solvable models
\cite{Korepin-Book,Jimbo-Miwa-book,KitanineMailletTerras-XXZ-corr1,QTM1,maillet-dynamical-1,maillet-dynamical-2,bjmst-fact6,2009JSMTE..04..003K,HGSIII,formfactor-asympt},
with particular interest devoted to the asymptotics of two-point
functions (for a review see \cite{karol-hab}), and to equilibrium mean
values of arbitrary short range operators of the Heisenberg spin
chains (for a review see
\cite{kluemper-goehmann-finiteT-review}). In contrast, the
current operators are very specific short range objects, and as we will
show, their finite volume mean
values take a remarkably simple form. This has not yet been noticed in
the Bethe Ansatz literature, and we believe that it deserves a study
on its own right.

In the following subsection we describe the conjecture of
\cite{doyon-ghd,jacopo-ghd} for the current mean
values and explain its role in the GHD, while omitting many technical
details. This brief introduction motivates our finite volume investigations.

After that, the remainder of this paper is composed as follows: In Section \ref{sec:main}
we specify the problem and present our main new results, with a
semi-classical interpretation given in \ref{sec:semi}.
Section \ref{sec:proof} includes our model independent proof, based on
a finite volume form factor expansion. 
Section
\ref{sec:construction} includes known generalities about the charge
and current
operators in integrable lattice models; also, it gives a short summary
of the Algebraic Bethe Ansatz.
The proof of our form factor expansion for the XXZ and XXX spin chains
is  presented in Section \ref{sec:fftcsa2}. In Section  \ref{sec:factorized}
we
point out a connection to the theory of factorized correlation
functions. Finally, we conclude in \ref{sec:conclusions}.

\subsection{Foundations of the GHD}

Isolated integrable models equilibrate to steady states described by
Generalized Gibbs Ensembles (GGE's). 
Each GGE can be
characterized by a set of parameters (generalized temperatures), or
alternatively, by the mean values of all the local and 
quasi-local conserved charges in the model
\cite{rigol-quench-review,prosen-enej-quasi-local-review,JS-CGGE}.  

The time scales of equilibration to the GGE are set by the microscopic laws. 
It follows that in mesoscopic or macroscopic dynamical processes
local equilibration happens much sooner than the characteristic times of
the transport processes. This leads to the hydrodynamic description:
GHD assumes
 the existence of fluid cells (regions in space much larger than the
inter-particle distance, and much smaller than the variation of the physical
observables) such that the state of each fluid cell can
be described by a local GGE. 
The parameters of these local GGE's are then space and time dependent.

The Generalized Eigenstate Thermalization Hypothesis (GETH)
\cite{rigol-geth} states that
in local equilibrium the local observables only depend on the mean
values of the conserved charges, and not on any other particular
detail of the states.
Thus, in order to describe the dynamical processes, it is
enough to establish flow equations for the conserved charges,
which will then determine all other physical observables on the
hydrodynamic scales.

For simplicity let us consider here a continuum model in the
thermodynamic limit. Let the complete
set of local and/or quasi-local conserved charges be
$Q_\alpha=\int_{-\infty}^\infty dx\ Q_\alpha(x)$, with $\alpha$ being
an 
index or a multi-index, and $Q_\alpha(x)$ being the charge density
operators. Conservation of the charges implies that there exist
current operators $J_\alpha(x)$ satisfying the equation of motion in
Heisenberg picture:
\begin{equation}
   \partial_t Q_\alpha(x,t)+\partial_x J_\alpha(x,t)=0.
\end{equation}
GHD concerns the mean values of these relations:
\begin{equation}
  \label{eqstate1}
  \partial_t \vev{Q_\alpha(x,t)}
  +\partial_x \vev{J_\alpha(x,t)}  =0.
\end{equation}
A closed set of flow equations can be obtained if the currents are
expressed using only local information about the charges. In hydrodynamics
this is performed in a derivative expansion:
\begin{equation}
  \begin{split}
   & \vev{J_\alpha(x,t)} =
    f_\alpha\Big[\{Q_\beta(x,t),\partial_xQ_\beta(x,t),\dots\}_{\beta=1,2,\dots}\Big].
     \end{split}
\end{equation}
In the first approximation
we neglect the spatial variations,
and express the currents in the fluid cells
using the mean values
$\vev{Q_\beta(x,t)}$ in that specific fluid cell only.
This approximation describes the ballistic part of the
transport.
The diffusive part of the transport can also be treated by considering the derivatives
$\partial_x\vev{Q_\beta(x,t)}$ \cite{jacopo-benjamin-bernard--ghd-diffusion-long}, but this will not
be considered here.

The ballistic flow equations will then follow from \eqref{eqstate1}, given that one can
compute the exact mean values of the currents in the local equilibrium
states, as a function of the charges,  or using any alternative description of
the local GGE's. This is the problem that we consider in this paper.

A large class of integrable models is solvable by the Bethe Ansatz \cite{Bethe-XXX};
prominent examples include the Heisenberg spin chains or the 1D Bose
gas with point-like interaction.
In these models the local equilibria in the thermodynamic limit can be characterized
by the root densities $\varrho_n(\lambda)$ of the interacting quasi-particles,
where $n$ stands for a particle type and $\lambda$ is the so-called
rapidity parameter. The root densities can be understood as
generalizations of momentum-dependent occupation numbers in free
theory. Dissipationless scattering implies that these densities are
well defined concepts even in the presence of interactions. The
construction of the GGE is equivalent to specifying all the
root densities $\varrho_n(\lambda)$, because they carry all
information about the local equilibria. This is the ultimate form
of the GETH, and it was understood in
\cite{JS-CGGE,enej-gge} following the earlier works
\cite{sajat-xxz-gge,essler-xxz-gge,JS-oTBA,sajat-oTBA,andrei-gge,sajat-GETH}. 

In GHD we thus need to specify the Bethe root densities for each fluid
cell, therefore they will depend also on the $x,t$ coordinates of the cell.
It is a very fruitful idea of the
GHD that instead of concentrating on the equations \eqref{eqstate1} for the
charges, one should derive flow equations for the rapidity distribution
functions. This can be achieved starting from \eqref{eqstate1},
by expressing both mean values
in the continuity equations using the densities $\varrho_n(\lambda)$ only.

The mean values of the charges can be computed
additively. In local equilibrium we have
\begin{equation}
  \label{Qa1}
  \vev{Q_\alpha}=\int d\lambda\ \varrho(\lambda) q_\alpha(\lambda),
\end{equation}
where $q_\alpha(\lambda)$ is the single particle eigenvalue of the
charges, and here we assumed only one particle species for
simplicity. In GHD \eqref{Qa1} is assumed to hold for each fluid cell separately.

For the currents it was conjectured in \cite{doyon-ghd,jacopo-ghd} that the mean values can be
computed using a semi-classical expression, namely by integrating over the
carried charge multiplied by an effective propagation speed:
\begin{equation}
  \label{Jtdl}
  \vev{J_\alpha}=\int d\lambda\ \varrho(\lambda) \veff(\lambda) q_\alpha(\lambda).
\end{equation}
Here the effective speed $\veff(\lambda)$ is a generalization of the
one-particle group velocity, which also takes into account the
interactions between the particles. It has a physical explanation
using a semi-classical argument: the one-particle 
wave packets suffer time delays due to the scattering on the other
particles, and these time delays
accumulate along the orbit, eventually modifying the propagation
speed. 
The resulting effective speed $\veff(\lambda)$ is a collective
property of the local GGE, because for each $\lambda$ it also 
depends on the particle density
$\varrho(\lambda')$ for all other $\lambda'$. For a precise definition
of 
$\veff(\lambda)$
we refer to \cite{doyon-ghd,jacopo-ghd}.

Eq. \eqref{Jtdl} has not yet been proven in Bethe
Ansatz. It is our goal to fill this gap and to prove \eqref{Jtdl}
starting from a rigorous finite volume computation.

Regarding the interpretation of \eqref{Jtdl} it was explained in
\cite{flea-gas}, 
that the GHD can be simulated by the
so-called ``flea gas'' model, which describes 1D motion of purely classical particles,
subject to time delays (displacements) as an effect of interparticle
collisions. Thus one observes a complete
quantum/classical correspondence on the hydrodynamic scale. 

In this work we show that the functional form of the mean values is the same in
finite and infinite volume, therefore the quantum/classical correspondence
holds with an arbitrary  finite number of particles.

\section{Current mean values}

\label{sec:main}

\subsection{Elements of integrability}

We consider integrable many body quantum models, including both lattice and
continuum theories \cite{Korepin-Book}. In this work we limit ourselves to those theories
where particle number is conserved, and which can be solved by the
traditional (``non-nested'') Bethe Ansatz.
Main examples are given by the Heisenberg spin chains, the 1D Bose
gas, and also certain integrable QFT's \cite{Korepin-Book}. Regarding
other types of integrable models we give a few comments in the Conclusions.

In this Section we present formulas pertaining to lattice
models.
We consider an integrable Hamiltonian $H$ in a finite volume $L$:
\begin{equation}
  H=\sum_{x=1}^L h(x).
\end{equation}
Here $h(x)$ is the Hamiltonian density.
In the most relevant cases $h(x)$ is a two site operator, but we do
not necessarily need this restriction. 
For simplicity we require periodic boundary conditions.

In integrable models there exists a family of conserved operators $Q_\alpha$,
where $\alpha$ is an index or multi-index. They mutually commute:
\begin{equation}
  [Q_\alpha,Q_\beta]=0,
\end{equation}
and the Hamiltonian is a member of the series. 

We concentrate on the strictly local operators, which
are given as
\begin{equation}
  \label{Qxdef}
  Q_\alpha=\sum_{x=1}^L Q_\alpha(x),
\end{equation}
where  $Q_\alpha(x)$
is a short range operator identified 
as the charge density. It is important that in a finite volume
$Q_\alpha$ is well defined for volumes larger than the range of
$Q_\alpha(x)$, and then the density $Q_\alpha(x)$ does not depend further
on $L$. Details on the canonical construction of the charges, and a
few concrete examples will be
given in \ref{sec:construction}.

The Hamiltonian is a member of the series, thus the global operator $Q_\alpha$
is conserved:
\begin{equation}
  \label{comm1}
\frac{d}{dt} Q_\alpha=i  [H,Q_\alpha]=0.
\end{equation}
We are interested in non-equilibrium processes and a hydrodynamic
description, therefore we investigate the time evolution of the charge contained in a
finite section. This leads to 
a continuity relation in
operator form:
\begin{equation}
  \label{Jdef}
\frac{d}{dt}  \sum_{x=x_1}^{x_2}  Q_\alpha(x)=
i  \left[H,\sum_{x=x_1}^{x_2}  Q_\alpha(x)\right]=J_\alpha(x_1)-J_\alpha(x_2+1).
\end{equation}
This relation defines the current operator $J_\alpha(x)$ associated to $Q_\alpha(x)$ under the time
evolution of $H$. The existence of $J_\alpha(x)$ follows simply from
\eqref{comm1} and locality arguments.

The relations \eqref{Qxdef} and \eqref{Jdef} do not
define the $Q_\alpha(x)$ and $J_\alpha(x)$ operators uniquely; certain subtleties
are discussed in Section \ref{sec:construction}. We just put forward
that the additive normalization of the current operators is chosen by
the physical requirement that
\begin{equation}
  \bra{0}J_\alpha(x)\ket{0}=0,
\end{equation}
where $\ket{0}$ is the vacuum or reference state with no particles.

Our goal is to determine the mean values
\begin{equation}
  \bra{n}J_\alpha(x)\ket{n},
\end{equation}
where $\ket{n}$ is an arbitrary excited state of the finite volume
Hamiltonian. In the thermodynamic limit these mean values will enter
the flow equations \eqref{eqstate1}.

In the models in question the exact eigenstates are found using the
Bethe Ansatz \cite{Bethe-XXX,Korepin-Book}. The states are
characterized by a set of lattice momenta
$\{p_1,\dots,p_N\}$ that describes the interacting spin waves.

The un-normalized Bethe wave function can be written for $x_1<x_2<\dots<x_N$ as \cite{Bethe-XXX}
\begin{equation}
  \label{psibethe}
  \begin{split}
&  \Psi(x_1,x_2,\dots,x_N)=\\
& \hspace{0.5cm} \sum_{\sigma\in S_N} \left[\exp(i\sum_{j=1}^N p_{\sigma_j}x_j) 
    \mathop{\prod_{j<k}}_{\sigma_j>\sigma_k} S(p_j,p_k)\right].
    \end{split}
\end{equation}
Here each term in the sum represents free wave propagation with a
given spatial ordering of the particles. The amplitude
$S(p_j,p_k)=e^{i\delta(p_j,p_k) }$ is a relative phase between
terms with different particle ordering, and it can be interpreted as
the two-particle scattering amplitude. It depends on the model in
question, and can be determined from the two-particle problem.
The wave function is two-particle
reducible: any multi-particle interaction is explicitly factorized into a
succession of two-particle scatterings.

These wave functions describe eigenstates if they are periodic, from
which we obtain the Bethe equations:
\begin{equation}
  \label{bethe0}
  e^{ip_jL}\prod_{k\ne j} S(p_j,p_k)=1,\qquad
  j=1\dots N.
\end{equation}

It is useful to introduce the rapidity parametrization $p=p(\lambda)$,
where $\lambda$ is the
additive parameter for the scattering phase:
\begin{equation}
  S(p_j,p_k)=S(p(\lambda_j),p(\lambda_k))=S(\lambda_j-\lambda_k).
\end{equation}
The Bethe equations can then be written as
\begin{equation}
  \label{bethe}
  e^{ip(\lambda_j)L}\prod_{k\ne j} S(\lambda_j-\lambda_k)=1,\qquad
  j=1\dots N.
\end{equation}
In the following the normalized $N$-particle Bethe states with
rapidities $\{\lambda\}_N=\{\lambda_1,\dots,\lambda_N\}$ will be denoted as
$\ket{\{\lambda\}_N}$.

The total energy and lattice momentum can be computed additively:
\begin{equation}
  \begin{split}
  \label{Edef}
  E&=\sum_{j=1}^N e(\lambda_j)\\
 P&=\sum_{j=1}^N p(\lambda_j)\quad \text{mod } 2\pi,
  \end{split}
\end{equation}
where the single-particle energy $e(\lambda)$ is a further characteristic function of the model.

Similarly, the eigenvalues of the conserved charges
are
\begin{equation}
  \label{Qmean}
  Q_\alpha\ket{\lambda_1,\dots,\lambda_N}
  =\left[\sum_{j=1}^N q_\alpha(\lambda_j) \right] \ket{\lambda_1,\dots,\lambda_N},
\end{equation}
where $q_\alpha(\lambda)$ are the one-particle eigenvalues.

For later use let us write the Bethe equations in the logarithmic form:
\begin{equation}
\label{Betheeq}
  p(\lambda_j)L+\sum_{k\ne j} \delta(\lambda_j-\lambda_k)=2\pi  I_j,
  \quad  j=1\dots N,
\end{equation}
Here $I_j\in \mathbb{Z}$ are the momentum quantum numbers, which
can be used to parametrize the states.

In our derivation an important role will be played by the so-called
Gaudin matrix $G$, which is defined as
\begin{equation}
  \label{gaudin2}
  G_{jk}=\frac{\partial}{\partial \lambda_k} (2\pi I_j),\qquad
  j,k=1\dots N,
\end{equation}
where now the $I_j$ are regarded
as functions of the rapidities.
Explicitly we have
\begin{equation}
  \label{Gaudin}
  G_{jk}=\delta_{jk}\left[p'(\lambda_j)L+\sum_{l=1}^N \varphi(\lambda_j-\lambda_l)
  \right]-\varphi(\lambda_j-\lambda_k),
\end{equation}
where
\begin{equation}
  \label{phidef}
  \varphi(\lambda)=-i\frac{\partial}{\partial \lambda} \log(S(\lambda)).
\end{equation}
There are two interpretations of the Gaudin matrix. First, $\det G$
describes the density of states in rapidity
space. This follows from the fact that in the space of the quantum
numbers the states are evenly distributed, and $G$ is defined as the
Jacobian of the mapping from $\lambda_j$ to $I_j$. Second, the Gaudin
determinant also describes the norm of the Bethe Ansatz wave function in many
integrable models (See Section \ref{sec:construction} and \cite{XXZ-gaudin-norms,korepin-norms}).

\subsection{Main result}

Our main result for the normalized mean values of the current operators is the following:
\begin{equation}
  \label{main}
  \bra{\{\lambda\}_N}J_\alpha(x)  \ket{\{\lambda\}_N}=
{\bf e'} \cdot G^{-1} \cdot {\bf q}_\alpha.
\end{equation}
Here the quantities ${\bf e'}$
and ${\bf q}_\alpha$ are $N$-dimensional vectors with elements
\begin{equation}
  ({\bf e'})_j=\frac{\partial e(\lambda_j)}{\partial \lambda},\qquad
  ({\bf q}_{\alpha})_{j}=q_\alpha(\lambda_j),
\end{equation}
and $G^{-1}$ is the inverse of the Gaudin matrix.

In the simplest case of $N=1$ the Gaudin matrix has a single element
$G_{11}=Lp'(\lambda)$ and \eqref{main} gives the anticipated
classical result
\begin{equation}
   \bra{\lambda}J_\alpha(x)  \ket{\lambda}=\frac{e'(\lambda)q_\alpha(\lambda)}{Lp'(\lambda)}
  =\frac{v(\lambda)q_\alpha(\lambda)}{L},
\end{equation}
where we introduced the bare group velocity
$v(\lambda)=e'(\lambda)/p'(\lambda)=\partial e/\partial p$. 

A similar semi-classical interpretation can be given also for higher particle numbers. 
Using the definition \eqref{gaudin2} and the additive
formula \eqref{Edef}
we find the alternative expression
\begin{equation}
  \label{maskepp}
  \bra{\{\lambda\}_N}J(x)  \ket{\{\lambda\}_N}=
 \frac{1}{L} \sum_{j=1}^N
 \veff(\lambda_j) q_\alpha(\lambda_j),
\end{equation}
where we defined the quantities
\begin{equation}
  \label{veff1}
  \veff(\lambda_j) =  \frac{L}{2\pi} \frac{\partial E}{\partial I_j}.
\end{equation}
In Section \ref{sec:semi} it is explained,
that the $\veff$ can be understood
in a simple semi-classical picture
as
effective velocities, describing the propagation of the individual
bare particles in the presence of the others. 

It is remarkable, that the exact result and the
functional form of the effective velocity are so simple
in the finite volume situation. In the thermodynamic limit the papers
\cite{doyon-ghd,jacopo-ghd} conjectured the formula \eqref{Jtdl}
with the effective speed given by
\begin{equation}
  \label{veffregi}
  \veff(\lambda)=\frac{\partial \eps(\lambda)}{\partial P(\lambda)},
\end{equation}
where $\eps(\lambda)$  and $P(\lambda)$ are the so-called ``dressed energy'' and
``dressed momentum''. These are computed
as the energy and momentum differences as 
we add a particle with rapidity $\lambda$ into a sea of particles. The
``dressing'' takes into account the backflow of the other particles,
which can be computed from
the Bethe equations \eqref{Betheeq}.

The correspondence between \eqref{veffregi} and our \eqref{veff1} is
evident: small changes in the dressed momentum and dressed energy can be
traced back to small changes in the momentum quantum numbers and the
overall finite volume energy, respectively:
\begin{equation}
  \delta \eps(\lambda_j)\sim \delta E,\qquad
  \delta P(\lambda_j)\sim \delta\left(\frac{2\pi I_j}{L}\right).
\end{equation}
This implies that \eqref{veffregi} is indeed the thermodynamic limit
of our \eqref{veff1}. Furthermore, our \eqref{maskepp} can be seen as the
finite volume origin of the thermodynamic formula 
\eqref{Jtdl}.

Eq. \eqref{main} is exact in those
cases when the Bethe wave function is exact; this holds for integrable
spin chains or the 1D Bose gas. On the other hand, in integrable QFT (iQFT)
in finite volume the Bethe wave function is only an
approximation, and in iQFT \eqref{main}
holds up to exponentially small corrections in the volume.

Depending on the model, the bare particles of the Bethe Ansatz can
form bound states. These are described by the so-called string
solutions of the Bethe equations \cite{Takahashi-Book}. It is
important that our formula \eqref{main} is exact on the level of the
individual Bethe rapidities, even in the presence of strings. An
effective description involving the string centers (describing the
rapidities of the composite particles) can be given afterwards using
well established methods \cite{kirillov-korepin-norms-strings}.

The main result \eqref{main} concerns the physical current
operators, that describe the flow under time evolution by the physical
Hamiltonian. 
However, it is also useful to consider certain generalized
current operators, that describe the flow of a given charge under time
evolution generated by some other charge.

Let us therefore consider
two local charges $Q_\alpha$ and $Q_\beta$ belonging to the same
integrable hierarchy, implying that all three
operators $H,Q_\alpha,Q_\beta$ commute with each other.
We define $J_\alpha^\beta$ to be the current of the charge $Q_\alpha$ under
unitary time evolution dictated by $Q_\beta$:
\begin{equation}
  \label{Jab}
i  \left[Q_\beta,\sum_{x=x_1}^{x_2}  Q_\alpha(x)\right]=J_\alpha^\beta(x_1)-J_\alpha^\beta(x_2+1).
\end{equation}
Locality of the charge densities $Q_{\alpha,\beta}(x)$ and the global
relation $[Q_\alpha,Q_\beta]=0$ implies that the operator equation
\eqref{Jab} can always be solved with some short-range $J_\alpha^\beta(x)$.

For the mean values of these generalized current operators we have the
following result:
\begin{equation}
  \label{mostgeneral}
  \bra{\{\lambda\}_N}J_\alpha^\beta(x)  \ket{\{\lambda\}_N}=
{\bf q}'_\beta \cdot G^{-1} \cdot {\bf q}_\alpha.
\end{equation}
Here ${\bf q}'_\beta$ is an $N$-element vector with components
$q_\beta'(\lambda_j)$, where $q_\beta(\lambda)$ is the one-particle
eigenvalue of the charge $Q_\beta$ and the prime denotes
differentiation. The analogy between \eqref{mostgeneral} and
\eqref{main} is evident: only the one-particle eigenvalues of the time
evolution operator are replaced.
A special case and certain symmetry properties of this general
statement are treated in Appendices \ref{sec:special}-\ref{sec:xxxex}. 

In the following Section we describe a semi-classical interpretation of
these results, whereas the full quantum mechanical proof is provided in
Section \ref{sec:proof}.

\section{The semi-classical interpretation}

\label{sec:semi}

Here we present a semi-classical computation which also gives a simple
physical interpretation for the main result \eqref{main}. Our
arguments are very similar to those presented in \cite{flea-gas}, with
the main difference being that we consider finite systems: we are
looking at
the motion of $N$ particles on a finite ring of volume
$L$ (see Fig. \ref{fig:semi}). For convenience we consider here continuum
models and a strictly point-like interaction.

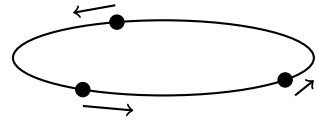
\begin{figure}
  \centering

  \begin{tikzpicture}
    \draw [thick] (0,0) ellipse (2 and 0.5);
 \draw [fill=black] ( -6.180339887e-01,4.755282581e-01) circle (0.7ex);
\draw [thick,->] ( -0.64,0.7) --  (-1.2,0.6);
      \draw [fill=black] (  1.618033989e+00 , -2.938926261e-01) circle  (0.7ex);
\draw [thick,->] (1.75,-0.5) -- (2,-0.3);
 \draw [fill=black] (  -1.071653590e+00,  -4.221639628e-01 )  circle (0.7ex);
  \draw [thick,->] (-1.07,-0.64) -- (-0.4,-0.7);
  \end{tikzpicture}
  
  \caption{The semi-classical picture for the effective
    velocities. There are $N$ particles moving on a circle of
    circumference $L$. The only effect of the interaction are the time
  delays after the scattering events. These accumulate and lead
  to well defined effective speeds in the long time limit. }
  \label{fig:semi}
\end{figure}

Instead of solving the time-dependent Schr\"odinger equation we
employ a semi-classical picture, namely we assume that the particles
can be assigned well-defined straight orbits as long as they don't interact
with each other. In this picture a particle with rapidity $\lambda$
is represented by a wave packet, which travels with the group velocity
\begin{equation}
  v(\lambda)=\frac{de}{dp}=
\frac{e'(\lambda)}{p'(\lambda)}.
\end{equation}
In a typical situation all speeds are different and particles 
meet as they travel around the volume. The scattering events are
taken into account by using exact quantum mechanical solution of the two body
problem. For pure Fourier modes this would mean that for the
scattering of particle $j$ on $k$ the wave function
has to be multiplied by the phase $S(\lambda_j-\lambda_k)$; this is
also reflected by the Bethe Ansatz wave function \eqref{psibethe}.
This momentum
dependent phase results in displacements of the center of the wave
packets
\cite{eisenbud-time-delay,wigner-time-delay,caux-soliton-scattering-time-delay}.
Such time delays are also present in classical integrable models, including
models supporting solitons \cite{classical-integrability-book} or the hard rod gas \cite{hard-rod-gas}.

After a displacement process the particles continue their path with
their own bare speeds, until a further scattering event occurs. The
time delays suffered in each of these events accumulate, and this
alters the actual propagation speed of the wave packets, leading to
the emergence of effective velocities $\veff(\lambda_j)$.
It is
important that due to the higher conservation laws the multi-particle
scattering events always factorize into a succession of two-particle
scatterings \cite{Mussardo:1992uc}, which also implies that the
particle rapidities never change during time evolution.

In this semi-classical picture the current mean values are
evaluated simply as
\begin{equation}
  \label{Jcl1}
  J_{\alpha,cl}=\frac{1}{L}\sum_{j=1}^N \veff(\lambda_j) q_\alpha(\lambda_j).
\end{equation}
Our goal is to find the emerging effective velocities.

We consider long times such that each
pair of particles has scattered on each other many times. In this limit
the particular order of the individual scattering events (which
depends on the initial positions of the particles) becomes
irrelevant.

The spatial displacement of particle $j$ caused by the scattering on the
particle $k$ ($j<k$) is given by the derivative \cite{eisenbud-time-delay,wigner-time-delay,caux-soliton-scattering-time-delay}
\begin{equation}
  \Delta s_{jk} = \frac{\partial \delta(p_j, p_k)}{\partial p_j}=
\frac{\varphi(\lambda_j-\lambda_k)}{p'(\lambda_j)},
  \label{eq:displacement1}
\end{equation}
where we used the rapidity parametrization and
 $\varphi(u)$ is defined in \eqref{phidef}.
This formula is valid when particle $j$ overtakes particle $k$ from the
left, i.e. when $v_j>v_k$. We can formally extend it as
\begin{equation}
  \label{eq:d2}
   \Delta s_{jk} =
\sigma_{jk}
   \frac{\varphi(\lambda_j-\lambda_k)}{p'(\lambda_j)},
\end{equation}
where
\begin{equation}
  \sigma_{jk}=
  \begin{cases}
    +1 & \text{ if } \veff(\lambda_j)>\veff(\lambda_k)\\
    -1 & \text{ if } \veff(\lambda_j)<\veff(\lambda_k).\\
  \end{cases}
\end{equation}

The time elapsed between the two scattering events of the same two particles $j$ and
$k$ can be expressed as: 
\begin{equation}
    T_{jk} = \frac{L}{\abs{\veff(\lambda_j)-\veff(\lambda_k)}}.
\end{equation}
For asymptotically long times $t$ the
accumulated displacement that particle $j$ suffers is given by:
\begin{equation}
    \Delta s_j(t) = \sum_{{\substack{k=1\\k\neq j}}}^N \Delta s_{jk}\cdot \frac{t}{T_{jk}}.
\end{equation}
This causes the difference between the effective and bare velocities:
\begin{equation}
\Delta s_j(t) = \Big(v(\lambda_j)-\veff(\lambda_j)\Big)t.
\end{equation}
Putting everything together we obtain the self-consistent relation
\begin{equation}
  v(\lambda_j) = \veff(\lambda_j) +
  \frac{1}{L} \sum_{{\substack{k=1\\k\neq j}}}^N \Delta s_{jk} \abs{\veff(\lambda_j)-\veff(\lambda_k)}.
    \label{eq:velocities}
  \end{equation}
As an effect of the extra sign $\sigma_{jk}$ in \eqref{eq:d2} it can
be written as
\begin{equation}
  v(\lambda_j) = \veff(\lambda_j)
  + \frac{1}{L} \sum_{{\substack{k=1\\k\neq j}}}^N \frac{\varphi_{jk}}{p'(\lambda_j)} (\veff(\lambda_j)-\veff(\lambda_k)).
    \label{eq:velocities2}
  \end{equation}
Multiplying with $Lp'(\lambda_j)$ and using the definition of the bare
group velocity we get
\begin{equation}
  e'(\lambda_j) L= Lp'(\lambda_j)\veff(\lambda_j)
  + \sum_{{\substack{k=1\\k\neq j}}}^N \varphi_{jk} (\veff(\lambda_j)-\veff(\lambda_k)).
    \label{eq:velocities3}
  \end{equation}
On the r.h.s. we can recognize the action of the Gaudin matrix
\eqref{Gaudin}, we can thus write
  \begin{equation}
L    {\bf e}' = G \cdot {\bf v}_{\text{eff}}.
\end{equation}
Multiplying with $G^{-1}$, and substituting the $\veff(\lambda_j)$
into \eqref{Jcl1} leads to
\begin{equation}
  \label{Jcl2}
  J_{\alpha,cl}=
  {\bf e}' \cdot G^{-1} \cdot {\bf q}_\alpha,
\end{equation}
which is identical to the full quantum mechanical result
\eqref{main}. We have thus demonstrated a complete quantum/classical
correspondence with a finite number of particles.

In our derivation we assumed that
$\veff(\lambda_j)>\veff(\lambda_k)$ whenever
$v(\lambda_j)>v(\lambda_k)$. However natural this requirement may
seem, there can be situations where it does
not hold \cite{marci-flea}. In those cases the quantum result remains
valid, but the semi-classical picture can not be applied.

\section{Proof using a form factor expansion}

\label{sec:proof}

Here we present a model-independent proof of our main result for the
current mean values. Our technique relies on a  finite volume
form factor expansion theorem. The proof of this expansion theorem can
depend on the particular model, but the computations of this Section
are quite general.

In the most part we will concentrate only on the physical currents
$J_\alpha$. The generalized currents defined in \eqref{Jab} will be
considered in \ref{sec:Jab}.

The starting point is to use the definition of the current operators \eqref{Jdef},
and to consider matrix elements of this operator relation. We intend
to compute the mean values of the currents, but taking the mean values
of \eqref{Jdef} gives automatically zero on both sides, due to the
Bethe states being translationally invariant and eigenstates of
$H$. Instead, it is useful to take the off-diagonal matrix elements
between two Bethe states with non-equal total lattice momentum:
\begin{equation}
  \label{QJ}
  \begin{split}
 &  i \left(\sum_{j=1}^N e(\lambda_j)-e(\mu_j)\right)
   \bra{\{\lambda\}_N}Q_\alpha(x)\ket{\{\mu\}_N}=\\
&   = \left(1-\prod_{j=1}^N e^{i(p(\mu_j)-p(\lambda_j))}\right) \bra{\{\lambda\}_N}J_\alpha(x)\ket{\{\mu\}_N}.
  \end{split}
 \end{equation}

In integrable models generic off-diagonal finite volume matrix elements of local operators can be expressed as
\begin{equation}
  \label{fftcsa1}
  \bra{\{\lambda\}_N}\ordo(0)\ket{\{\mu\}_M}=
  \frac{F^\ordo(\{\lambda\}_N|\{\mu\}_M)}{\det G_\mu \det G_\lambda},
\end{equation}
where the function $F^\ordo(\{\lambda\}_N|\{\mu\}_M)$ is the so-called
form factor
and $\det G_\lambda$ and
$\det G_\mu$ describe the norm of the Bethe Ansatz wave function, or
alternatively the density of Bethe states in rapidity space. They are $N\times N$ and $M\times M$ Gaudin
determinants computed from the sets of rapidities $\{\lambda\}_N$ and
$\{\mu\}_M$.

The form factors describe the transition matrix element for the
un-normalized Bethe wave functions \eqref{psi}.
They are meromorphic
functions and they are
completely independent from the volume. The volume dependence of the
physical matrix elements only comes through the solution of the Bethe
equations and the normalization factors.
The properties of the form factors have been investigated both in QFT
\cite{Smirnov-Book} and using Algebraic Bethe Ansatz (ABA)
\cite{Korepin-Book}. A derivation of the analytic properties in the
Lieb-Liniger model was also given using coordinate Bethe Ansatz in
\cite{sajat-nested}. The statement \eqref{fftcsa1} is well known in the literature dealing
with integrable lattice models \cite{Korepin-Book}, and for integrable
QFT it was first written down in \cite{fftcsa1}.

As opposed to the transition matrix elements the mean values of local
operators in Bethe
states can not be expressed directly using the infinite volume form
factors. The reason is the appearance of the so-called disconnected
terms: on a mathematical level they arise from the kinematical poles of the form
factors, whereas their physical interpretation is that they describe
processes when a subset of the particles does not interact with the
local operator. On the other hand, those
processes when some of the particles do interact with the operator are described by certain
diagonal limits of the form factor functions.

In order to describe the finite volume mean values, let us define the
so-called symmetric evaluation of the diagonal form factors as
\begin{equation}
  \begin{split}
    &  F^{\ordo}_s(\{\lambda\}_N)=
    \lim_{\eps\to 0}
 F^\ordo(\lambda_1+\eps,\dots,\lambda_N+\eps|\lambda_N,\dots,\lambda_1),
  \end{split}  
\end{equation}
There is an other often used diagonal limit, called the connected form
factors, but they will not be used in this work and for a thorough
discussion we refer to \cite{fftcsa2}.

It is useful to define the functions
$\rho_N(\lambda_1,\dots,\lambda_N)$ as the $N\times N$ Gaudin determinants
evaluated at the set of rapidities $\{\lambda_1,\dots,\lambda_N\}$. In
the notations we suppress the index $N$ and write simply
\begin{equation}
\rho(\{\lambda\}_N)=\det G_\lambda
\end{equation}
We remind that the Gaudin determinants describe the norms of Bethe
wave functions for eigenstates, ie. for sets of rapidities satisfying
the Bethe equations. On the other hand, the functions
$\rho(\{\lambda\}_N)$ are defined for arbitrary sets of rapidities.

It is useful to write down the first two $\rho(\{\lambda\}_N)$ functions. For $N=1$ we have simply
\begin{equation}
  \label{rho1}
\rho(\lambda)=Lp'(\lambda).
\end{equation}
For $N=2$ the Gaudin matrix is
\begin{equation}
  \label{G2}
  G=   \begin{pmatrix}
    p'(\lambda_1)+\varphi_{12} &  -\varphi_{12} \\
      -\varphi_{12} &  p'(\lambda_2)+\varphi_{12}
    \end{pmatrix},
\end{equation}
and its determinant is
\begin{equation}
  \label{rho2}
  \rho(\lambda_1,\lambda_2)
  =L^2p'(\lambda_1)p'(\lambda_2)+L(p'(\lambda_1)+p'(\lambda_2))\varphi_{12},
\end{equation}
where $\varphi_{12}=\varphi(\lambda_1-\lambda_2)$.

Our proof for the current mean values will be based on the following
expansion theorem.

\begin{thm}
  \label{fftcsathm}
The finite volume mean values of local operators can be computed
through the expansion
\begin{equation}
  \label{fftcsa2}
  \bra{\{\lambda\}_N}\ordo(0)\ket{\{\lambda\}_N}=
\frac{\sum\limits_{\{\lambda^+\}\cup\{\lambda^-\}}
  F^{\ordo}_s(\{\lambda^+\}) \rho(\{\lambda^-\}) }
{\rho(\{\lambda\})},
\end{equation}
where the summation runs over all partitionings of the set of the
rapidities into $\{\lambda^+\}\cup\{\lambda^-\}$.
The partitionings include those cases when either
subset is the empty set, and in these cases it is understood that $\rho(\emptyset)=1$ and
 $F^{\ordo}_s(\emptyset)=\vev{\ordo}$ is the v.e.v. The relation \eqref{fftcsa2}
is exact when the Bethe Ansatz wave functions are exact eigenstates of
the model.
\end{thm}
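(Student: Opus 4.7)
The strategy is to obtain \eqref{fftcsa2} as a carefully regularized diagonal limit of the off-diagonal matrix element formula \eqref{fftcsa1}. I would introduce a small shift vector $\kappa=(\kappa_1,\ldots,\kappa_N)$ and consider $\bra{\{\lambda+\kappa\}_N}\ordo(0)\ket{\{\lambda\}_N}$, then take the symmetric limit $\kappa_i=\eps\to 0$. The shifted rapidities $\{\lambda+\kappa\}_N$ need not solve the Bethe equations, but \eqref{fftcsa1} is still valid as an identity for un-normalized Bethe states. The numerator $F^\ordo(\{\lambda+\kappa\}|\{\lambda\})$ develops kinematical singularities at every $\kappa_i=0$, and organizing the limit according to which of these singularities fire simultaneously is what will generate the sum over partitions in \eqref{fftcsa2}.

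The key input is the kinematical residue axiom for form factors: the residue of $F^\ordo(\{\mu\}|\{\lambda\})$ at $\mu_i=\lambda_j$ is proportional to a lower-particle form factor with the two rapidities removed, multiplied by a universal factor $1-\prod_k S(\lambda_j-\lambda_k)/S(\lambda_k-\lambda_i)$ built from the scattering amplitudes. Iterating this axiom for every index $i$ in a subset $S\subset\{1,\ldots,N\}$ splits the form factor into a product of singular prefactors of order $\eps^{-|S|}$ times a restricted form factor $F^\ordo(\{\lambda+\kappa\}^{(S^c)}|\{\lambda\}^{(S^c)})$. As $\eps\to 0$, the latter tends to the symmetric diagonal evaluation $F^\ordo_s(\{\lambda^+\})$ with $\lambda^+=\{\lambda_i\}_{i\in S^c}$, by the very definition of the symmetric limit. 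The full mean value is then the sum of such contributions over all $S$, which is the partition sum of \eqref{fftcsa2}.

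The decisive step is to identify the singular prefactor attached to each $S$ with $\rho(\{\lambda^-\})$, where $\lambda^-=\{\lambda_i\}_{i\in S}$. To leading order in $\eps$ each scattering-phase factor expands as a derivative of the logarithm of the left-hand side of the Bethe equation \eqref{bethe}, which is precisely a row of the Gaudin matrix \eqref{Gaudin}. Using that $\{\lambda\}_N$ solves the Bethe equations, the $|S|$-fold product reorganizes by a standard multilinear-algebra manipulation into the determinant of the principal submatrix of $G$ indexed by $S$, namely $\rho(\{\lambda^-\})$. One of the two Gaudin determinants in the denominator of \eqref{fftcsa1} cancels the overall normalization of the singular sum, leaving the single $\rho(\{\lambda\})$ displayed in \eqref{fftcsa2}. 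The main obstacle is precisely this combinatorial reorganization: showing that the sum over subsets $S$, combined with the linearization of the scattering phases through the Bethe equations, produces exactly the Gaudin determinant for each subset rather than some more complicated determinantal expression. A secondary subtlety is that the diagonal limit is direction-dependent; the symmetric evaluation $F^\ordo_s$ is singled out by the uniform choice $\kappa_i=\eps$, and one must justify that this is the direction dictated by the un-normalized wave functions entering \eqref{fftcsa1}, since non-symmetric directions would instead generate connected form factors or linear combinations thereof.
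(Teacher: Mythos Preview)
Your approach is genuinely different from the paper's, and it has a real gap at the step you yourself flag as ``the main obstacle.''

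The paper does \emph{not} prove \eqref{fftcsa2} by iterating kinematical poles of $F^\ordo$ and then reorganizing the residues into Gaudin sub-determinants. Instead it uses Korepin's trick of treating the ratios $l_j=a(\lambda_j)/d(\lambda_j)$ as \emph{independent} variables, so that the un-normalized matrix element $M_N^\ordo(\{\lambda^C\},\{\lambda^B\},\{l^C\},\{l^B\})$ is a function of $4N$ free arguments. An ABA computation (commuting the operators $A,B,C,D$ through the Bethe states) shows that the singular part as $\lambda_N^C\to\lambda_N^B$ is proportional to $(l_N^C-l_N^B)$ times an $(N-1)$-particle matrix element with \emph{modified} vacuum functions. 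After the diagonal limit the mean value depends \emph{linearly} on each $z_j=i\partial_\lambda\log l(\lambda_j)$, with the $z$-independent part being exactly $F^\ordo_s(\{\lambda\}_N)$. One then checks by induction in $N$ that the partition sum on the right of \eqref{fftcsa2} has the same linear $z_j$-derivatives (this reduces to expanding $\rho(\{\lambda^-\})$ along the row containing $z_N$) and the same $z=0$ value. No ``reorganization into principal minors'' is ever performed directly.

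Your route is closer in spirit to the iQFT argument of Bajnok--Wu the paper cites, where the kinematical axiom is input rather than derived. For the lattice chains the pole structure you invoke is not an axiom; it must be proven from the ABA commutation relations, and when you do so the residue naturally comes with the factor $(l_N^C-l_N^B)$ and modified $a,d$ functions, not with the ``$1-\prod S$'' factor you write (that form only emerges \emph{after} specializing to on-shell rapidities). More importantly, your claim that the $|S|$-fold product of linearized phases ``reorganizes by a standard multilinear-algebra manipulation'' into $\rho(\{\lambda^-\})$ is the entire content of the theorem, not a standard lemma: the iterated residues do not factor over $S$ because each residue modifies the remaining phase factors, and the resulting nested structure is precisely what the paper untangles via the $z$-linearity and induction. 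As stated, your proposal identifies the right objects but does not supply the combinatorial argument; you would either need to carry out the Bajnok--Wu style computation in full for the lattice form factors, or switch to the independent-variable induction the paper uses.
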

This theorem was first formulated in \cite{fftcsa2} for integrable
QFT, where the Bethe Ansatz for the finite volume eigenstates is not
exact due to the presence of virtual particles. Therefore, in iQFT the
theorem holds up to corrections exponentially small in the volume. On
the other hand, it is an exact relation in
non-relativistic models including the 1D Bose gas and the Heisenberg
spin chains.
We believe that the theorem is new in the case of the XXZ and XXX spin
chains, and it will be proven rigorously in Section \ref{sec:fftcsa2}. Nevertheless,
similar theorems had been known for particular cases in the ABA literature 
\cite{Korepin-Book}.

In \cite{sajat-LM} it was shown that the LeClair-Mussardo (LM) formula (an
integral series developed for thermal mean values) can be considered a thermodynamic
limit of this expansion theorem, whereas in \cite{sajat-ujLM} it was
shown that alternatively  \eqref{fftcsa2} can be derived from the LM formula using
certain analytic continuations. In \cite{bajnok-diagonal} it was also
shown that in iQFT it can be derived directly using the off-diagonal
relation \eqref{fftcsa1}. Regarding the continuum gas models, the
expansion was proven for certain local operators in the Lieb-Liniger
models in \cite{sajat-XXZ-to-LL}.

We note that there is
an alternative expansion theorem using the connected form factors \cite{fftcsa2}, but it will
not be used here.

The continuity relations yield a connection between
the symmetric form
factors of the charge and current operators. Introducing the short-hand notations
\begin{equation}
  \begin{split}
      \FQa(\{\lambda\}_N)&\equiv F_{s}^{Q_\alpha(0)}(\{\lambda\}_N)\\
  \FJa(\{\lambda\}_N)&\equiv F_{s}^{J_\alpha(0)}(\{\lambda\}_N)
    \end{split}
\end{equation}
we get from  \eqref{QJ}
\begin{equation}
  \label{QJsym}
  \left(\sum_{j=1}^N e'(\lambda_j)\right)
  \FQa(\{\lambda\}_N)=
\left(\sum_{j=1}^N p'(\lambda_j)\right) 
\FJa(\{\lambda\}_N).
  \end{equation}
  
  Our strategy is the following: First we find the symmetric form
  factors of the charge densities by comparing the formula of the expansion theorem
  \ref{fftcsathm} to the 
 known mean values \eqref{Qmean}. Next we use the above
  relation to find the $\FJa$. Finally we use Theorem \ref{fftcsathm} for the current operators: we sum up the resulting
  expansion to obtain \eqref{main}. 
Essentially the same strategy has already been applied in
\cite{doyon-ghd,dinh-long-takato-ghd} directly in the thermodynamic
limit, using the LeClair-Mussardo series. The novelty of our approach is
that we perform these steps in finite volume, and that we also provide
the proof of our expansion theorem for the XXX and XXZ spin chains (see Section \ref{sec:fftcsa2}).

\subsection{The form factors of the charge densities}

We consider \eqref{fftcsa2} in the case of the charge density operator
$Q_\alpha(0)$ and write it as
\begin{equation}
  \label{Qexp}
  \begin{split}
&\frac{\rho(\{\lambda\})}{L}\sum_{j=1}^N q_\alpha(\lambda_j) =
   \sum\limits_{\{\lambda^+\}\cup\{\lambda^-\}} \FQa(\{\lambda^+\}) \rho(\{\lambda^-\}).
  \end{split}
\end{equation}
These are algebraic relations that hold for any particle number $N$
and any finite volume $L$. The symmetric diagonal form factors can be
extracted using a recursive procedure: we consider the above algebraic
relations for $N=1,2,\dots$, and at each $N$ we compute the
$N$-particle form factor by subtracting the terms
with all $\FQa(\{\lambda^+\})$ with a lower number of
particles, obtained earlier.

At $N=1$ the relation
immediately gives
\begin{equation}
  \label{FQ1}
\FQa(\lambda)=p'(\lambda)q_\alpha(\lambda),
\end{equation}
where we used \eqref{rho1} and substituted $\vev{Q_\alpha(0)}=0$.

At $N=2$ we use \eqref{rho1}-\eqref{rho2}. 
Substituting them into \eqref{Qexp} and using also \eqref{FQ1} we
observe the cancellation of some terms, leading eventually to
\begin{equation}
  \FQa(\lambda_1,\lambda_2)=
 (q_\alpha(\lambda_1)+q_\alpha(\lambda_2)) (p'(\lambda_1)+p'(\lambda_2))\varphi_{12}.
\end{equation}

This procedure does in principle yield all higher symmetric
form factors. However, the direct recursive subtractions for $N\ge 3$ become more
involved and it is advantageous to use an alternative method. For the
computation of the Gaudin determinants we will apply a graph
theoretical matrix-tree
theorem, which has already proven to be
useful for different problems 
\cite{ivan-didina-dinh-long-1,ivan-didina-dinh-long-2,dinh-long-takato-ghd}. 

Let us introduce the following definitions.
Given a graph $\Gamma$ a directed graph $\mathcal{F}$ is a directed spanning forest of $\Gamma$
if it satisfies the following:
\begin{itemize}
\item $\mathcal{F}$ includes all vertices of $\Gamma$.
\item $\mathcal{F}$ does not include any circles.
\item Each vertex has at most one incoming edge.
\end{itemize}
The nodes without incoming edges are called roots. Each spanning
forest can be decomposed as a union of spanning trees, which are the
 connected components of the forest. It can be seen from the
above definitions, that each spanning tree has
exactly one root. 

\begin{thm}
  \label{matrix-tree-thm}
 Let $G$ be an $N\times N$ matrix obtained as the difference $G=D-K$, where
 $D$ is diagonal and $K$ satisfies the property
 \begin{equation}
   \sum_{k=1}^N K_{jk}=0,\quad j=1,\dots,N.
 \end{equation}
In this case the determinant of $G$ can be expressed as a sum over the
directed spanning forests of the complete graph with $N$ nodes. For
each spanning forest $\mathcal{F}$ let $R(\mathcal{F})$ denote the set
of the roots, and let $l_{jk}$ denote the edges of $\mathcal{F}$
pointing from node $j$ to $k$. Then we have
\begin{equation}
  \label{Gexp}
  \det G=\sum_{\mathcal{F}} \prod_{j\in R(\mathcal{F})} D_{jj}
  \prod_{l_{jk}\in \mathcal{F}} K_{jk}.
\end{equation}
\end{thm}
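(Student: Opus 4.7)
The plan is to combine row multilinearity of the determinant with the classical all--minors Matrix--Tree theorem. The first step is to decompose each row $j$ of $G=D-K$ as $D_{jj}\mathbf{e}_j^{T}-\mathbf{K}_j$, where $\mathbf{K}_j$ is the $j$-th row of $K$ and $\mathbf{e}_j$ the standard basis vector; distributing multilinearity over all rows yields
\begin{equation*}
\det G=\sum_{S\subseteq\{1,\dots,N\}}\Bigl(\prod_{j\in S}D_{jj}\Bigr)(-1)^{|S^{c}|}\det K[S^{c};S^{c}],
\end{equation*}
because for each $j\in S$ the basis-vector row $\mathbf{e}_j^{T}$ collapses the $j$-th column, reducing what remains to a signed principal minor of $K$ indexed by $S^{c}=\{1,\dots,N\}\setminus S$.

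The second step is to invoke the all-minors Matrix--Tree theorem: since $K$ has vanishing row sums, it plays the role of a weighted combinatorial Laplacian on the complete directed graph on $N$ vertices with off-diagonal weights $K_{jk}$, and the theorem expresses the signed principal minor as
\begin{equation*}
(-1)^{|S^{c}|}\det K[S^{c};S^{c}]=\sum_{\mathcal{F}\,:\,R(\mathcal{F})=S}\prod_{l_{jk}\in\mathcal{F}}K_{jk},
\end{equation*}
the sum running over directed spanning forests whose root set is exactly $S$, with edges oriented to match the convention used in the theorem statement. Plugging this back in and reorganizing the double sum as a single sum over all spanning forests of the complete graph on $N$ vertices, grouped by their root sets, produces the stated identity, since the outer $\prod_{j\in S}D_{jj}$ becomes $\prod_{j\in R(\mathcal{F})}D_{jj}$.

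The main obstacle is the careful sign and orientation bookkeeping in the second step: one must verify that the alternating signs from the Leibniz expansion of $\det K[S^{c};S^{c}]$ combine with $(-1)^{|S^{c}|}$ to produce each monomial $\prod K_{jk}$ with coefficient exactly $+1$, and that the orientation of the forest edges is consistent with the index placement in the weight $K_{jk}$. A self-contained alternative that avoids this external input is an induction on $N$: the base case $N=1$ is immediate since the row-sum condition gives $K_{11}=0$ and hence $\det G=D_{11}$, matching the single-vertex forest; the inductive step follows by cofactor expansion along the last row together with the identity $D_{NN}=G_{NN}-\sum_{k\neq N}K_{Nk}$, which naturally separates spanning forests in which $N$ is a root from those in which $N$ is attached to another vertex by an edge, the bookkeeping of the latter reducing to an $(N-1)$-vertex problem of the same form after a Laplacian-preserving contraction.
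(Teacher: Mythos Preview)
The paper does not supply its own proof of this statement; it simply cites the reference of Chaiken and Kleitman and moves on. Your proposal therefore already goes further than the paper does, and the route you outline---expand $\det(D-K)$ by row multilinearity into $\sum_{S}\bigl(\prod_{j\in S}D_{jj}\bigr)\det\bigl((-K)[S^{c};S^{c}]\bigr)$, then identify each principal minor of $-K$ with a forest sum via the all-minors Matrix--Tree theorem---is correct and is essentially the standard argument one finds in that literature.

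One word of caution on the orientation bookkeeping you rightly flag as the main obstacle. With the stated hypothesis $\sum_{k}K_{jk}=0$ (zero \emph{row} sums), the matrix $-K$ is an out-degree Laplacian, and its principal cofactors enumerate spanning forests in which each non-root has a unique \emph{outgoing} edge (edges directed toward the roots). The convention spelled out in the theorem (``each vertex has at most one incoming edge'', roots are sources) is the opposite orientation. For the symmetric case $K_{jk}=K_{kj}$---which is all the paper ever uses, since $\varphi(\lambda_j-\lambda_k)$ is even---the two conventions coincide and the issue disappears; for genuinely asymmetric $K$ one of the two conventions would have to be flipped for the formula to hold literally. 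Your instinct to treat this step with care is therefore exactly right.
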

For a proof see for example
\cite{matrix-tree-theorems}.

The Gaudin determinant given by \eqref{Gaudin} satisfies the
requirements of this Theorem with $D_{jj}=p'(\lambda_j)L$ and
$K_{jk}=\varphi(\lambda_j-\lambda_k)$. The main idea
to obtain the form factors from \eqref{Qexp} is to consider the formal
$L\to 0$ limit of this relation. To do this we need to observe the
$L\to 0$ limit of the various Gaudin determinants. Each term in
\eqref{Gexp} carries a factor of $L^r$ where $r\ge 1$ is 
the number of roots for the particular spanning forest
$\mathcal{F}$. On the r.h.s. of \eqref{Qexp} we have
\begin{equation}
  \lim_{L\to 0} \rho(\{\lambda^-\})=0
\end{equation}
for every non-empty set of $\{\lambda^-\}$. Thus the only term on the 
r.h.s. which survives the $L\to 0$ limit is the one where
$\{\lambda_-\}=\emptyset$ yielding
$\FQa(\{\lambda\})$. On the other hand, on the l.h.s. we
need to keep the $\ordo(L)$ term in $\rho(\{\lambda\})$, which according to the
above theorem gives an expansion over all directed spanning trees $\mathcal{F}'$:
\begin{equation}
  \lim_{L\to 0}\frac{\rho(\{\lambda\}_N)}{L}=
  \sum_{\mathcal{F}'} \left[ p'(\lambda_r)
  \prod_{l_{jk}\in \mathcal{F}'} \varphi_{jk}\right],
\end{equation}
where for each spanning tree $\mathcal{F}'$ the index $r$ denotes its
root vertex, and we used the abbreviation $\varphi_{jk}=\varphi(\lambda_j-\lambda_k)$.

Each directed spanning tree can be obtained uniquely from a
non-directed spanning tree by selecting the single root vertex and
choosing the directions of the edges accordingly. In our case the
function $\varphi$ is symmetric, thus the direction of the edges does
not influence the factors of  $\varphi(\lambda_u-\lambda_v)$. Each
vertex has to be chosen exactly one time as a root, thus we obtain
\begin{equation}
  \lim_{L\to 0}\frac{\rho(\{\lambda\}_N)}{L}=
 \left[\sum_{j=1}^N p'(\lambda_j)\right]
  \sum_{\mathcal{T}}  \prod_{l_{jk}\in \mathcal{T}} \varphi_{jk},
\end{equation}
where the summation runs over the non-directed spanning trees $\mathcal{T}$.

Finally
\begin{equation}
  \label{FQs}
  \FQa(\{\lambda\}_N)=
  \left[\sum_{j=1}^N p'(\lambda_j)\right]
  \left[\sum_{j=1}^N q_\alpha(\lambda_j)\right]
  \sum_{\mathcal{T}}  \prod_{l_{jk}\in \mathcal{T}} \varphi_{jk}.
\end{equation}

\subsection{Summation for the current operators}

Eqs. \eqref{FQs} and \eqref{QJsym} yield the symmetric diagonal form
factors of the current operators:
\begin{equation}
  \label{FJs}
  \FJa(\{\lambda\}_N)=
  \left[\sum_{j=1}^N e'(\lambda_j)\right]
    \left[\sum_{j=1}^N q_\alpha(\lambda_j)\right]
  \sum_{\mathcal{T}}  \prod_{l_{jk}\in \mathcal{T}} \varphi_{jk}.
\end{equation}
Our task is to sum up the expansion for the currents
\begin{equation}
  \label{fftcsa2J}
  \bra{\{\lambda\}_N}J_\alpha(0)\ket{\{\lambda\}_N}=
\frac{\sum\limits_{\{\lambda^+\}\cup\{\lambda^-\}} \FJa(\{\lambda^+\}) \rho({\lambda^-})}
{\rho(\{\lambda\})}.
\end{equation}
Once again it is instructive to consider the first few cases.

At $N=1$ we have
\begin{equation}
  \label{FJ1}
  \FJa(\lambda)=
e'(\lambda)q_\alpha(\lambda).
\end{equation}
Using $\vev{J_\alpha(0)}_0=0$ and \eqref{rho1} gives immediately
\begin{equation}
  \bra{\lambda}J_\alpha(0)\ket{\lambda}=
  \frac{1}{L}\frac{e'(\lambda)}{p'(\lambda)}q_\alpha(\lambda),
\end{equation}
as anticipated.

At $N=2$ there are 3 non-zero terms in the summation:
\begin{equation}
  \label{inter1}
  \frac{\FJa(\lambda_1,\lambda_2)
    +\FJa(\lambda_1)\rho(\lambda_2)+\FJa(\lambda_2)\rho(\lambda_1)}
  {\rho(\lambda_1,\lambda_2)}.
\end{equation}

The two-particle symmetric form factor is
\begin{equation}
    \FJa(\lambda_1,\lambda_2)= (e'(\lambda_1)+e'(\lambda_2))
 (q_\alpha(\lambda_1)+q_\alpha(\lambda_2))\varphi_{12}.
\end{equation}
Substituting this and also \eqref{FJ1} and \eqref{rho1} into \eqref{inter1}
we can express the mean value in the product form
\begin{equation}
  \frac{
    \begin{pmatrix}
      e'(\lambda_1) & e'(\lambda_2)
    \end{pmatrix}
    \begin{pmatrix}
    p'(\lambda_2)+\varphi_{12} &  \varphi_{12} \\
      \varphi_{12} &  p'(\lambda_1)+\varphi_{12}
    \end{pmatrix}
    \begin{pmatrix}
      q_\alpha(\lambda_1) \\ q_\alpha(\lambda_2)
    \end{pmatrix}
  }{\rho(\lambda_1,\lambda_2)}.
\end{equation}
We can recognize the inverse of the two-particle
Gaudin matrix \eqref{G2}. Thus in this case we have obtained
\begin{equation}
    \bra{\lambda_1,\lambda_2}J_\alpha(0)\ket{\lambda_1,\lambda_2}=
   \begin{pmatrix}
      e'(\lambda_1) & e'(\lambda_2)
    \end{pmatrix}
G^{-1}
    \begin{pmatrix}
      q_\alpha(\lambda_1) \\ q_\alpha(\lambda_2)
    \end{pmatrix},
  \end{equation}
as stated in our main result \eqref{main}.

The summation for $N\ge 3$ is considerably more involved.
From the structure of the form factors and the Gaudin determinants we
can see  that for each pair $jk$ of particles there will be various
contributions including the factors $e'(\lambda_j)q_\alpha(\lambda_k)$, stemming
from different terms in \eqref{fftcsa2J}. Our main result
\eqref{main} states that the sum of these terms
will reproduce the $jk$ element of the matrix $G^{-1}$. For this
inverse matrix we can use the
formula
\begin{equation}
  G^{-1}=\frac{\text{adj}(G)}{\det G},
\end{equation}
where $\text{adj}(G)$ is the so-called adjungate matrix of $G$. The Gaudin
determinant is present in the denominator of \eqref{fftcsa2J}, thus
\eqref{main} holds if in the nominator 
the sum of the terms with $e'(\lambda_j)q_\alpha(\lambda_k)$ reproduce the
$jk$ components of $\text{adj}(G)$. The proof of this is not trivial,
and it is presented in Appendix \ref{sec:adjproof}.

\subsection{Mean values of the generalized currents}

\label{sec:Jab}

It is straightforward to repeat the previous calculations for the case
of the generalized currents $J^\beta_\alpha$ defined in
\eqref{Jab}. For the symmetric diagonal form factors we will use the
notation $\FJab(\{\lambda\}_N)$.
The local continuity equation then leads to
\begin{equation}
  \label{Jabsym}
  \left(\sum_{j=1}^N q_\beta'(\lambda_j)\right)
  \FQa(\{\lambda\}_N)=
\left(\sum_{j=1}^N p'(\lambda_j)\right) 
\FJab(\{\lambda\}_N),
  \end{equation}
  from which
\begin{equation}
  \label{FJab}
\FJab(\{\lambda\}_N)=
  \left[\sum_{j=1}^N q_\beta'(\lambda_j)\right]
    \left[\sum_{j=1}^N q_\alpha(\lambda_j)\right]
  \sum_{\mathcal{T}}  \prod_{l_{jk}\in \mathcal{T}} \varphi_{jk}.
\end{equation}
These form factors are needed to sum up the expansion theorem
  \eqref{fftcsa2}. All of the previous computations can be applied
  by exchanging the function $e'(\lambda)$ with $q_\beta'(\lambda)$.
 This has a simple interpretation: for the general currents the time
  evolution is dictated by $Q_\beta$ instead of the physical Hamiltonian.
  Performing all the previous steps
  we obtain our result \eqref{mostgeneral}.

\section{Current operators and Algebraic Bethe Ansatz in integrable lattice models}

\label{sec:construction}

Here we review the canonical construction of the conserved charge and
associated current operators; we also sketch the standard method of
the Algebraic Bethe Ansatz to find the eigenstates.
We will concentrate on lattice models that are obtained from
integrable Lax operators. The treatment below is rather general, with
concrete examples given later in \ref{sec:examples1}.

Let $\mathcal{H}=V_1\otimes
V_2\otimes\dots\otimes V_L$ denote the Hilbert space of the model,
with $V_j\approx \complex^D$ with some $D=2,3,\dots$ and let
$V_a\approx \complex^{\tilde D}$ with some $\tilde D$ denote
the so-called auxiliary space. In our main examples $\tilde D=D$, but
this is not necessary. Let $\mathcal{L}(u)$ denote the
so-called Lax operator acting on $V_j\otimes V_a$. Here $u\in\complex$ is the
spectral parameter.

The monodromy matrix $T(u)$ acting on $V_a\otimes \mathcal{H}$ is
defined as
\begin{equation}
  \label{monodromy}
T(u)=\mathcal{L}_{1,a}(u)\dots\mathcal{L}_{L,a}(u).
 \end{equation}
The transfer matrix is given by the trace in auxiliary space, which
corresponds to enforcing periodic boundary conditions:
\begin{equation}
     t(u)=\text{Tr}_a T(u).
\end{equation}
The local Lax operators satisfy the exchange property
\begin{equation}
\mathcal{L}_{1a}(u)\mathcal{L}_{1b}(v)R_{ab}(v-u) =
R_{ab}(v-u)\mathcal{L}_{1b}(v)\mathcal{L}_{1a}(u),
  \label{eq:RLL}
\end{equation}
where we introduced two auxiliary spaces $V_{a,b}$ and $R(u)$ is the
so-called $R$-matrix acting on $V_a\otimes V_b$. It satisfies the
Yang-Baxter relation
\begin{equation}
  \begin{split}
&  R_{12}(u_1-u_2)R_{13}(u_1-u_3)R_{23}(u_2-u_3) =\\
&\hspace{0.5cm}=R_{23}(u_2-u_3)R_{13}(u_1-u_3)R_{12}(u_1-u_2),
  \end{split}
  \label{eq:YB}
\end{equation}
which is a relation of operators acting on the triple product
$V_1\otimes V_2\otimes V_3$ of auxiliary spaces.

As an effect of these relations the transfer matrices form a commuting
family of operators \cite{Korepin-Book}:
\begin{equation}
  [t(u),t(v)]=0.
\end{equation}
The transfer matrix encodes the hierarchy of the conserved charges,
which are obtained by expanding $t(u)$ around certain special points.

We consider models where the dimensions of the physical and
auxiliary spaces are equal and the local Lax operator can be chosen to
be identical to the $R$-matrix: $\mathcal{L}(u)=R(u)$.
Furthermore, we will consider cases where the $R$-matrix satisfies the
initial condition $R(0)=P$ with $P$ being the permutation operator, such that 
the transfer matrix satisfies
\begin{equation}
  t(0)=U,
\end{equation}
where $U$ is the translation operator by one site.

For any $\alpha\in \mathbb{N}$, $\alpha\ge 2$ we then define
\begin{equation}
  \label{Qalfadef}
  Q_\alpha=i\left.\left(\frac{\partial}{\partial u}\right)^{\alpha-1} \log(t(u))\right|_{u=0}.
\end{equation}
These charges will be extensive and they can be
written as a sum over the charge density operators \cite{Luscher-conserved}: 
\begin{equation}
  \label{Qalphadef}
  Q_\alpha=\sum_{j=1}^L Q_\alpha(x).
\end{equation}
With this definition $Q_\alpha(x)$ spans $\alpha$ sites, and for
$\alpha=2$ we have
\begin{equation}
  \label{q2kappa}
  Q_2=\kappa H,
\end{equation}
with $H$ being the physical Hamiltonian and $\kappa$ being an
$L$-independent factor, which depends on the particular conventions
that are used.
If the spectral parameter is chosen appropriately, then the 
canonical $Q_\alpha$ defined above are all Hermitian.

We consider models with particle number conservation. In these cases
there is a reference state $\ket{0}$ with zero particles present. We
require that the overall normalization of the transfer matrix 
satisfies
\begin{equation}
  \bra{0}t(u)\ket{0}=1,
\end{equation}
leading to
\begin{equation}
  \label{addi}
  \bra{0}Q_\alpha\ket{0}=0.
\end{equation}
Additional multiplicative factors in $t(u)$ would alter the additive
normalization of the charges.

Having found the charge densities, the continuity
relations \eqref{Jdef} and \eqref{Jab} uniquely define the current and
generalized current operators $J_\alpha$ and $J_\alpha^\beta$. It
follows from \eqref{q2kappa} that for $\beta=2$ we can identify $J_\alpha^2=\kappa J_\alpha$.

We note that the relations \eqref{Qalfadef}-\eqref{Qalphadef} do not define $Q_\alpha(x)$ unambiguously: for
any choice $Q_\alpha(x)$ we can take a local operator $D(x)$ and
define an alternative density
\begin{equation}
  \label{Qgauge}
  Q_\alpha'(x)=Q_\alpha(x)+D(x+1)-D(x),
\end{equation}
which leads to the same integrated charge. This transformation can be
considered as a  ``gauge freedom'' for the definition of the charge
densities, and it is
discussed in detail in \cite{jacopo-benjamin-bernard--ghd-diffusion-long}.
This ``gauge choice'' does not
alter the charge mean values, but it changes the definition of
the current operators as
\begin{equation}
  J^{\beta'}_\alpha(x)=J^\beta_\alpha(x)-i[Q_\beta,D(x)].
\end{equation}
The additional terms do not affect the mean values of $Q_\alpha(x)$ and $J^\beta_\alpha(x)$.

An alternative  way of constructing the charges is with the help of
the boost operator
\cite{Tetelman,sogo-wadati-boost-xxz,Thacker-boost,GM-higher-conserved-XXZ},
which is defined on the infinite chain as the formal expression
\begin{equation}
  \label{boostdef}
  B=\sum_{x=-\infty}^{\infty} x Q_2(x),
\end{equation}
where the density of the charge $Q_2$ is simply
\begin{equation}
  Q_2(x)=i P_{x,x+1}\left.\frac{dR_{x,x+1}(\lambda)}{d\lambda}\right|_{\lambda=0}.
\end{equation}
A formal manipulation shows that \cite{Thacker-boost,GM-higher-conserved-XXZ}
\begin{equation}
  \frac{dt_\infty(\lambda)}{d\lambda}=i [B,t_\infty(\lambda)]+\text{const.}
\end{equation}
where $t_\infty(\lambda)$ is the transfer matrix of an infinite
chain. It follows that
\begin{equation}
  \label{boostrel}
  Q_{\alpha+1}=i[B,Q_\alpha]+\text{const.}
\end{equation}
The additional constant parts are not fixed by the formal
computations, and need to be adjusted afterwards.  Defining a
recursion as $\tilde Q_{\alpha+1}=i [B,Q_\alpha]$ and using
\eqref{addi} we get the correct normalization
\begin{equation}
  Q_{\alpha+1}= \tilde Q_{\alpha+1}-\bra{0} \tilde Q_{\alpha+1}\ket{0}.
\end{equation}

It is also possible to compute the current operators $J^\beta_\alpha(x)$
using a generalization of the boost operator. A formal application of
\eqref{Jab} gives
\begin{equation}
i [Q_\beta,\sum_x x Q_\alpha(x)]=\sum_x  J^\beta_\alpha(x).
\end{equation}
This relation can be used to obtain $J^\beta_\alpha(x)$, but depending on the
situation the direct application of \eqref{Jab} might be more efficient.

\subsection{The Bethe Ansatz solution}

Let us now focus on the case of $D=\bar D=2$, furthermore we consider
models with $U(1)$ symmetry where the fundamental $R$-matrix is of the
form
\begin{equation}
  \label{Rdef}
  R(u)=
  \begin{pmatrix}
    1 & 0 & 0 & 0\\
    0 & b(u) & c(u) & 0 \\
    0 & c(u) & b(u) & 0 \\
    0 & 0 & 0 & 1
  \end{pmatrix}.
\end{equation}
Specific examples will be given later in \ref{sec:examples1}.

The monodromy matrix defined in
\eqref{monodromy} is usually written  in the block form
\begin{equation}
  T(u)=
  \begin{pmatrix}
    A(u) & B(u) \\
    C(u) & D(u)
  \end{pmatrix},
\end{equation}
where the blocks correspond to the degrees of freedom in auxiliary
space, and $A(u), B(u), C(u), D(u)$ are operators acting on the spin
chain.

It follows from the local relation \eqref{eq:RLL} that the monodromy
matrix satisfies 
\begin{equation}
T_{a}(u)T_{b}(v)R_{ab}(v-u) =
R_{ab}(v-u)T_{b}(v)T_{a}(u),
  \label{eq:RTT}
\end{equation}
where $a,b$ refer to two different auxiliary spaces. Commutation
relations between the $A,B,C,D$ operators can be derived from
\eqref{eq:RTT}; they are listed in Appendix \ref{sec:singproof}.

In our models possessing  $U(1)$ symmetry there exists a reference state $\ket{0}$ which is
annihilated by all $C(u)$ for all $u$. Typically
$\ket{0}$ is chosen as the state with all spins up.
Then the Bethe states can be created in Algebraic Bethe Ansatz as
\begin{equation}
  \label{BBethe}
  \ket{\lambda_1,\dots,\lambda_N}=\prod_{j=1}^N B(\lambda_j-\sigma) \ket{0}.
\end{equation}
Here  $\sigma$ is a constant which is chosen later such that rapidity
parametrization becomes exactly the same as in coordinate Bethe Ansatz.

These states are eigenstates of the spin chain Hamiltonian if the rapidities
satisfy the Bethe equations
\begin{equation}
  \frac{a(\lambda_j-\sigma)}{d(\lambda_j-\sigma)}\prod_{k\ne j}
\frac{f(\lambda_k,\lambda_j)}{f(\lambda_j,\lambda_k)}
  =1,
\end{equation}
where we  introduced the function
\begin{equation}
  \label{fdef}
  f(u,v)=\frac{1}{b(u-v)},
\end{equation}
and  the vacuum eigenvalues $a(u),d(u)$ defined as
\begin{equation}
  A(u)\ket{0}=a(u)\ket{0},\qquad D(u)\ket{0}=d(u)\ket{0}.
\end{equation}
For $R$-matrices of the form \eqref{Rdef} we have
\begin{equation}
  a(u)=1,\qquad d(u)=(b(u))^L.
\end{equation}

Dual eigenstates are created as
\begin{equation}
  \bra{\lambda_1,\dots,\lambda_N}=\bra{0} \prod_{j=1}^N C(\lambda_j-\sigma).
\end{equation}
For on-shell states of the physical chain these are the
adjoints of the states \eqref{BBethe} (for detailed proof see
\cite{Korepin-Book}), but in the more general case including 
certain inhomogeneity parameters (leading to non-Hermitian
Hamiltonians) they are only dual vectors. 

For eigenvectors the norm of the Bethe states is
\begin{equation}
  \label{Bethenorm}
  \begin{split}
    &
\bra{0} \prod_{j=1}^N C(\lambda_j)\prod_{j=1}^N B(\lambda_j) \ket{0}=\\
&\hspace{1cm}=(\tilde\kappa)^N\left(\prod_{j\ne k} f(\lambda_j,\lambda_k)\right)
 \det G_\lambda,
 \end{split}
\end{equation}
where $\tilde\kappa$ is a model-dependent constant. This statement was
proven in \cite{korepin-norms} based on the singularity 
properties of general overlaps.

Finally, the eigenvalues of the transfer matrix are
\begin{equation}
  \label{teigs}
  t(u)=
  a(u)\prod_{j=1}^N f(\lambda_j-\sigma,u)+
   d(u)\prod_{j=1}^N f(u,\lambda_j-\sigma).
\end{equation}

\subsection{Main examples}

\label{sec:examples1}

The $SU(2)$-invariant XXX Heisenberg spin chain is given by the Hamiltonian
\begin{equation}
  H=\sum_{j=1}^L (\sigma^x_j\sigma_{j+1}^x+\sigma^y_j\sigma_{j+1}^y+\sigma^z_j\sigma_{j+1}^z-1),
\end{equation}
where $\sigma^a_j$, $a=x,y,z$ are the Pauli matrices acting on site
$j$. This integrable model can be obtained from the $R$-matrix of the
form \eqref{Rdef}
with
\begin{equation}
  b(u)=\frac{u}{u+i},\qquad c(u)=\frac{i}{u+i},
\end{equation}
the shift parameter is $\sigma=i/2$ and $\tilde\kappa=1$.

The canonical definition \eqref{Qalfadef} of the charges gives $Q_2=H/2$
\cite{Korepin-Book}; the additive normalization is such that
\eqref{addi} is satisfied.

The eigenstates of the model organize themselves into
$SU(2)$-multiplets, the Bethe Ansatz gives the highest weight vectors.
The wave functions are of the form \eqref{psibethe} with the rapidity
parametrization given by
\begin{equation}
  \begin{split}
  e^{i p(\lambda)}&= \frac{\lambda-i/2}{\lambda+i/2}\\    
e^{i\delta(\lambda)}&=\frac{\lambda+i}{\lambda-i}.
\end{split}
\end{equation}
The one particle energy eigenvalue is $e(\lambda)=2q_2(\lambda)$
with
\begin{equation}
  \label{q2}
q_2(\lambda)=-p'(\lambda)=-\frac{1}{\lambda^2+1/4}.
\end{equation}
The rapidity parameters take values in the whole complex plain. The
solutions of the Bethe equations organize themselves into strings,
which describe bound states of spin waves \cite{Takahashi-Book}. These
bound states can be regarded as different particle types in 
the thermodynamic limit. However, we perform our finite volume
analysis on the level of the individual Bethe roots, therefore we do
not treat the string solutions separately.

In this model the canonical charges $Q_\alpha$ defined by \eqref{Qalfadef} can be computed
explicitly \cite{GM-higher-conserved-XXX}. For the sake of
completeness we present the explicit formulas up to $Q_4$ in Appendix
\ref{sec:xxxex}; more explicit results are found in  \cite{GM-higher-conserved-XXX}.

Regarding the one-particle eigenvalues of the charges it follows directly from \eqref{teigs}
that
\begin{equation}
  \label{qalfami}
  q_\alpha(\lambda)=
  \left. \left(
\frac{\partial}{\partial x}
  \right)^{\alpha-2} q_2(\lambda-x)\right|_{x=0}.
\end{equation}
The transfer matrix commutes with the global $SU(2)$ transformations,
therefore the $Q_\alpha$ with $\alpha\ge 2$ are all $SU(2)$-invariant
operators. The global spin operators are additional conserved
quantities, and the traditional choice is to add  $Q_1=S^z$ into the
commuting family. If the vacuum is chosen as the reference state with
all spins up, then the one-particle eigenvalues of $Q_1$ are simply
$q_1(\lambda)=-1$. 

Explicit real space formulas for the current operators of the XXX
model are not available. Based on 
\cite{GM-higher-conserved-XXX} it seems plausible that closed form
results can be computed, but we have not pursued this
direction. Nevertheless we computed the first few currents and
generalized currents, the results are presented in \ref{sec:xxxex}.

Our second example is the XXZ model, which is given by the Hamiltonian
\begin{equation}
  H=\sum_{j=1}^L (\sigma^x_j\sigma_{j+1}^x+\sigma^y_j\sigma_{j+1}^y+\Delta(\sigma^z_j\sigma_{j+1}^z-1)).
\end{equation}
This model can be obtained from an $R$-matrix of the form \eqref{Rdef} with
\begin{equation}
  \label{bcdef}
  b(u)=\frac{\sin(u)}{\sin(u+i\eta)},\qquad c(u)=\frac{\sin(i\eta)}{\sin(u+i\eta)},
\end{equation}
where $\Delta=\cosh(\eta)$. In
this normalization we have $Q_2=H/(2\sinh\eta)$. 
The shift parameter is $i\eta/2$, and $\tilde\kappa=\sinh(\eta)$.

For simplicity we focus on the regime
$\Delta>1$, where the rapidity parameters take values in the strip
$\Re(\lambda)\in [-\pi/2,\pi/2]$ and the momentum and scattering
amplitudes are 
\begin{equation}
  \begin{split}
  e^{i p(\lambda)}&= \frac{\sin(\lambda-i\eta/2)}{\sin(\lambda+i\eta/2)}\\    
e^{i\delta(\lambda)}&=\frac{\sin(\lambda+i\eta)}{\sin(\lambda-i\eta)},
\end{split}
\end{equation}
with the one particle energy being $e(\lambda)=2\sinh(\eta)
q_2(\lambda)$ where now
\begin{equation}
  \label{q2xxz}
  q_2(\lambda)= \frac{\sinh(\eta)}{2(\cos(2\lambda)-\cosh(\eta))}.
\end{equation}
The one-particle eigenvalues of the canonical charges $Q_\alpha$ are
again given by \eqref{qalfami}, with the $q_2(\lambda)$ function
above. The real space representation of the $Q_\alpha$
is treated in detail in \cite{GM-higher-conserved-XXZ}, but we do not
use those results here. The $U(1)$-invariance of the model leads to
the independent conserved charge $Q_1=S^z$ with one-particle
eigenvalues 
$q_1(\lambda)=-1$.

\section{Proof of the expansion theorem -- the XXZ chain}

\label{sec:fftcsa2}

Here we prove the Theorem \ref{fftcsathm} using standard methods of the Algebraic
Bethe 
Ansatz (ABA) \cite{Korepin-Book}. In fact, our proof can be considered
a generalization of the proof given by Korepin for the norms of the
Bethe Ansatz wave functions \cite{korepin-norms}. Similar ideas have
been worked out by one of the authors in the work \cite{sajat-LM},
which considered certain local correlators of the continuum 1D Bose gas.
Here we restrict ourselves to the case of
the XXZ spin chain, with the $R$-matrix given by conventions
\eqref{Rdef}-\eqref{bcdef}. The case of the XXX chain can be treated similarly.

For the proof it is useful to define re-normalized operators
\begin{equation}
  \label{renorm}
  \mathbb{B}(u)=\frac{B(u)}{d(u)},\qquad
  \mathbb{C}(u)=\frac{C(u)}{d(u)}.\qquad
\end{equation}
In order to shorten the notations, in this Section we do not use the rapidity shift $-i\eta/2$ that
appeared earlier in \eqref{BBethe}.

Our aim is to derive the form factor expansion for the normalized mean
values
\begin{equation}
  \frac{\bra{0} \prod_{j=1}^N \mathbb{C}(\lambda_j)\ \ordo\ \prod_{j=1}^N \mathbb{B}(\lambda_j) \ket{0}}
  {\bra{0} \prod_{j=1}^N \mathbb{C}(\lambda_j)\prod_{j=1}^N \mathbb{B}(\lambda_j) \ket{0}},
\end{equation}
where $\ordo$ is any operator of the finite spin chain. Quite
interestingly, for this proof we do not require any locality
property from the operator. In fact, locality of an operator is not even a well defined
concept in a finite chain. Thus we don't impose any restriction on
the range of the operator $\ordo$.

To be precise, let us consider the elementary matrices $E_{ab}$,
$a,b=1,2$ and let $E_{ab}(x)$ stand for operator which acts as
$E_{ab}$ on site $x=1,\dots,L$  and with the identity elsewhere. We
perform the proof for an arbitrary product
\begin{equation}
  \label{Eabx}
\ordo=  \prod_{x=1}^L E_{a_xb_x}(x).
\end{equation}
Each operator of the finite chain is a linear combination of these
products. Short range operators are obtained by taking
traces in some subset of the indices $a_x,b_x$.

In order to compute these mean values we need to embed the operators
\eqref{Eabx} in the Yang-Baxter algebra. This procedure is called the
``quantum inverse scattering problem'' and was solved in
\cite{maillet-inverse-scatt,goehmann-korepin-inverse,maillet-terras-inverse}. 

In the case of the homogeneous chain we have
\begin{equation}
  \label{inversesol}
 \prod_{x=L}^1 E_{a_xb_x}(x)=\prod_{x=L}^1  T_{a_xb_x}(0).
\end{equation}

Our strategy is that we prove the theorem for an arbitrary product
\begin{equation}
\ordo=  X(\mu_L)\dots X(\mu_2) X(\mu_1),
\end{equation}
where $X(\mu)$ may represent one of the four operators $A,B,C,D$
evaluated at spectral parameter $\mu$. Afterwards we take the
$\mu_j\to 0$ limit, and by continuity we obtain the statement for
\eqref{inversesol}. 

Our proof is based on the singularity properties of the matrix
elements of operators. We follow closely the proof of Korepin for the
norms of Bethe states \cite{korepin-norms}; in fact, our proof can be
considered a slight generalization of the methods of Korepin. For an
earlier similar proof see \cite{sajat-LM}.

We introduce the following notation for a
matrix element of the arbitrary operator $\ordo$ between two states
(not necessarily eigenstates) described by the rapidity sets
$\{\lambda^B\}_N$ and $\{\lambda^C\}_N$ 
\begin{equation}
\label{matrix_element}
\begin{split}
M_N^\ordo(&\{\lambda^C\}_N,\{\lambda^B\}_N,\{l^C\}_N,\{l^B\}_N)=\\
&=\bra{0}\prod_{j=1}^N\mathbb{C}(\lambda_j^C) \ordo \prod_{j=1}^N\mathbb{B}(\lambda_j^B)\ket{0}.
\end{split}
\end{equation}
These matrix elements depend on the rapidities and the variables
$l(\lambda)=a(\lambda)/d(\lambda)$, where $a(\lambda)$ and
$d(\lambda)$ are the vacuum expectation values of the operators
$A(\lambda)$ and $D(\lambda)$ respectively \cite{Korepin-Book}. For what follows, it is
important, that we can consider arbitrary functions for $a(\lambda)$
and $d(\lambda)$ and therefore $M_N^\ordo$ is the function of $4N$
independent variables. It follows from the commutativity of the $\mathbb{B}$
and $\mathbb{C}$ operators that $M_N^\ordo$ is invariant with respect
to simultaneous exchanges $\lambda_j^B\leftrightarrow\lambda_k^B$,
$l_j^B\leftrightarrow l_k^B$, and similarly for the rapidities on the
left hand side.

The matrix elements have apparent singularities as two rapidities from
the two sides approach each other. These apparent poles result from
the commutation relations between the  $\mathbb{B}$
and $\mathbb{C}$ operators. We will show that the structure of these
poles completely determines the mean values. For simplicity we will
focus on the singularities as $\lambda_N^C-\lambda_N^B\to 0$; the
other cases follow simply from the permutation symmetry.

\begin{thm}
The matrix elements satisfy the following singularity property: 
\begin{equation}
\label{sing_prop}
\begin{split}
&M_N^\ordo(\{\lambda^C\}_N,\{\lambda^B\}_N,\{l^C\}_N,\{l^B\}_N)\xrightarrow{\lambda_N^C\rightarrow\lambda_N^B}\\
&\rightarrow\frac{i\sinh(\eta)}{\lambda_N^C-\lambda_N^B}\big(l_N^C-l_N^B\big)
\left(\prod_{k=1}^{N-1}f_{kN}^Cf_{kN}^B\right)\times\\
&M_{N-1}^\ordo(\{\lambda^C\}_{N-1},\{\lambda^B\}_{N-1},\{l_{mod}^C\}_{N-1},\{l_{mod}^B\}_{N-1}),
\end{split}
\end{equation}
where in the third line the matrix element is calculated with the
modified vacuum expectation values
\begin{equation}
  \label{admod}
  a_{mod}(\lambda)=a(\lambda)f(\lambda_N,\lambda),\quad 
  d_{mod}(\lambda)=d(\lambda)f(\lambda,\lambda_N).
\end{equation}
\end{thm}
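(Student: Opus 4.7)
The plan is to adapt Korepin's proof of the norm-singularity formula \cite{korepin-norms} to the present setting, where an arbitrary operator $\ordo$ is inserted between the bra and ket. The key observation is that $M_N^\ordo$, regarded as a meromorphic function of $\lambda_N^C$ with the remaining $4N-1$ variables (in particular $l_N^C$) held independent, inherits its singularity at $\lambda_N^C=\lambda_N^B$ only from the Yang-Baxter exchange that eventually moves $\mathbb{C}(\lambda_N^C)$ past $\mathbb{B}(\lambda_N^B)$; the operator $\ordo$ can influence the analytic structure only through its own (generic) spectral parameters $\mu_j$, which lie away from the point of interest.

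First, using $[\mathbb{B}(u),\mathbb{B}(v)]=[\mathbb{C}(u),\mathbb{C}(v)]=0$, I would bring $\mathbb{C}(\lambda_N^C)$ and $\mathbb{B}(\lambda_N^B)$ immediately adjacent to $\ordo$. Next, I would commute $\mathbb{C}(\lambda_N^C)$ rightward through $\ordo$ and through the remaining $(N-1)$ factors $\mathbb{B}(\lambda_k^B)$ using the RTT exchange relations. Every correction term introduced by these commutations carries a denominator $b(\lambda_N^C-\mu_j)$ or $b(\lambda_N^C-\lambda_k^B)$ with $k<N$, both of which are regular at $\lambda_N^C=\lambda_N^B$ and thus irrelevant for the residue. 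The sole source of the desired pole is therefore the explicit $\mathbb{C}(\lambda_N^C)\mathbb{B}(\lambda_N^B)$ commutator, whose singular coefficient $c(u-v)/b(u-v)\sim i\sinh(\eta)/(u-v)$ yields the $1/(\lambda_N^C-\lambda_N^B)$ prefactor, while the accompanying operator combination $A(\lambda_N^B)D(\lambda_N^C)-A(\lambda_N^C)D(\lambda_N^B)$ reduces on the vacua to $(l_N^C-l_N^B)$. The resulting $A,D$ operators at rapidity $\lambda_N$ must then be transported outward through the remaining $(N-1)$ operators $\mathbb{B}(\lambda_k^B)$ on the right and $\mathbb{C}(\lambda_k^C)$ on the left using the $AB,DB,AC,DC$ exchange rules, with the diagonal parts directly supplying the factors $\prod_{k=1}^{N-1}f_{kN}^B$ and $\prod_{k=1}^{N-1}f_{kN}^C$.

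The main technical obstacle is the reorganization of the many off-diagonal contributions produced by the $AB,DB,AC,DC$ exchanges in this last step: they do not cancel individually but, as in Korepin's argument, reassemble to turn the surviving $(N-1)$-particle matrix element of $\ordo$ into one computed with the modified vacuum functions $a_{mod},d_{mod}$ of \eqref{admod}. Controlling this reassembly rigorously calls for a careful induction on $N$, parallel to the norm argument of \cite{korepin-norms} and \cite{sajat-LM}. A secondary subtlety is that the $A,D$ operators at $\lambda_N$ must also pass through $\ordo$ itself, but the relevant commutators introduce only denominators $b(\lambda_N-\mu_j)$, which remain regular at the point of interest, so that $\ordo$ is left intact inside the reduced $(N-1)$-particle matrix element. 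Once these combinatorial steps are in place, the overall $i\sinh(\eta)$ prefactor follows directly from the Laurent expansion of $c(u-v)/b(u-v)$ at $u=v$, and the claimed singularity formula is obtained.
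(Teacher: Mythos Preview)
Your approach is genuinely different from the paper's and, while plausible in outline, glosses over the step that carries all the weight.

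\textbf{What the paper does.} The paper never commutes $\mathbb{C}(\lambda_N^C)$ through $\ordo$. Instead it first reduces $\ordo$ to an ordered product $C\cdots C\,D\cdots D\,A\cdots A\,B\cdots B$, then expands the action of the $D\cdots A$ block on the ket $\prod_k\mathbb{B}(\lambda_k^B)\ket{0}$ as an explicit sum over ``substituted'' Bethe-like states (some $\lambda_k^B$ replaced by $\mu_j$'s), with computable coefficients $G^{n_1\dots}_{l_1\dots}$. Each term is then a plain scalar product, to which Korepin's known singularity formula \eqref{sing_prop_scalar} applies directly. The modification rule \eqref{admod} emerges because in every surviving term (those still containing $\mathbb{B}(\lambda_N^B)$) each vacuum factor $a(\xi)$ or $d(\xi)$ comes accompanied by exactly the right $f(\lambda_N^B,\xi)$ or $f(\xi,\lambda_N^B)$ from the expansion coefficients. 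Extra $C$ and $B$ operators in $\ordo$ are absorbed into the bra and ket and handled by the same scalar-product formula. This buys a clean reduction to an already proven result.

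\textbf{Where your argument is incomplete.} Your claim that ``$\ordo$ is left intact inside the reduced $(N-1)$-particle matrix element'' is precisely the point that requires proof, and it is not true in the naive sense. When $C(\lambda_N^C)$ is pushed through $\ordo$ (and later $A(\lambda_N),D(\lambda_N)$ are pushed back through it), the diagonal pieces pick up $f$-factors depending on the internal rapidities $\mu_j$ of $\ordo$, while the off-diagonal pieces genuinely change the operator content of $\ordo$ (e.g.\ $A(\lambda_N)B(\mu)$ produces a $B(\lambda_N)A(\mu)$ term). Showing that all of this reorganises into the \emph{same} operator $\ordo$ evaluated with $a_{mod},d_{mod}$ is the entire content of the theorem, not a secondary subtlety. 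Your remark that the relevant commutators ``introduce only denominators $b(\lambda_N-\mu_j)$'' addresses regularity but not this recombination. The paper's expansion makes the recombination transparent because the $f$-factors sit explicitly next to the $a(\xi),d(\xi)$ in the coefficients $G$; in your scheme one would have to prove an operator identity for arbitrary words in $A,B,C,D$, which is substantially harder to control than the paper's term-by-term reduction to scalar products.
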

The proof is rather technical and it is presented in
Appendix \ref{sec:singproof}.

In the physical case (when the $l_N^{B,C}$ variables are not independent) this expression gives a finite result
\begin{equation}
\frac{i\sinh(\eta)}{\lambda_N^C-\lambda_N^B}\big(l_N^C-l_N^B\big)\xrightarrow{\lambda_N^C\rightarrow\lambda_N^B}
\sinh(\eta) l(\lambda) z(\lambda),
\end{equation}
where $z(\lambda)=i\partial_\lambda\log l(\lambda)$.

The diagonal evaluation of the matrix element is defined as:
\begin{equation}
\begin{split}
&M_{N,d}^\ordo(\{\lambda\}_N,\{l\}_N,\{z\}_N)=\\
&=\lim_{\lambda_k^C\rightarrow\lambda_k^B}M_N^\ordo(\{\lambda^C\}_N,\{\lambda^B\}_N,\{l^C\}_N,\{l^B\}_N) 
\end{split}
\end{equation}
where the limit is performed for every $k=1,\dots,N$. This quantity depends on $3N$ independent
variables $\{\lambda\}_N$, $\{l\}_N$ and $\{z\}_N$. From
\eqref{sing_prop} it follows that the dependence on $z_N$
is linear, and the proportionality factor is
\begin{equation}
\begin{split}
&\frac{\partial}{\partial z_N}M_{N,d}^\ordo(\{\lambda\}_N,\{l\}_N,\{z\}_N)=\\
&=\sinh(\eta)l(\lambda_N)\left(\prod_{k=1}^{N-1}f_{kN}^Cf_{kN}^B\right)\times\\
&\times M_{N-1,d}^\ordo(\{\lambda\}_{N-1},\{l_{mod}\}_{N-1},\{z_{mod}\}_{N-1}),
\end{split}
\end{equation}
where
\begin{equation}
z_{mod}(\lambda)=z(\lambda)+\varphi(\lambda-\lambda_N).
\end{equation}
To calculate the expectation values in the eigenstates of the system,
we have to take the rapidities to the solutions of the Bethe
equations. The quantity obtained after this does not depend on the
$l$-parameters anymore, since they are given by the Bethe equations: 
\begin{equation}
  \begin{split}
    &  \vev{\ordo}_N(\{\lambda\}_N,\{z\}_N)=\\
    &\hspace{0.3cm}=M_{N,d}^\ordo(\{\lambda\}_N,\{l\}_N,\{z\}_N)|_{\{\lambda\} \textrm{ solves B.E.}}.
  \end{split}
\end{equation}
The dependence on $z_N$ is still linear:
\begin{equation}
\label{expval_derivative}
\begin{split}
  &\frac{\partial \vev{\ordo}_N(\{\lambda\}_N,\{z\}_N)}
  {\partial z_N}=
\left(\prod_{k=1}^{N-1}f_{kN}f_{Nk}\right)
\times\\
&\times\sinh(\eta)\vev{\ordo}_{N-1}(\{\lambda\}_{N-1},\{z_{mod}\}_{N-1}).
\end{split}
\end{equation}
To continue the calculation we need to introduce the form factors
$\mathbb{F}^\ordo_N$ and $\mathbb{F}^\ordo_{N,s}$ (these differ from the previously used
form factors only in an overall normalization) 
\begin{equation}
\begin{split}
&\mathbb{F}^\ordo_N(\{\lambda^C\}_N,\{\lambda^B\}_N)=\\
&=M_N^\ordo(\{\lambda^C\}_N,\{\lambda^B\}_N,\{l^C\}_N,\{l^B\}_N)
|_{\text{B.E.}},
\end{split}
\end{equation}
where it is understood that both sets of rapidities solve the Bethe
equations, i.e. we express the $\{l^C\}_N,\{l^B\}_N$ using the
rapidities. From \eqref{sing_prop} it is obvious that this form factor
satisfies the following recursion relation 
\begin{equation}
\begin{split}
\mathbb{F}^\ordo_N&(\{\lambda^C\}_N,\{\lambda^B\}_N)\xrightarrow{\lambda_N^C\rightarrow\lambda_N^B}\\
\rightarrow&\frac{i\sinh(\eta)}{\lambda_N^C-\lambda_N^B}
\left(\prod_{k=1}^{N-1}f_{Nk}^Cf_{kN}^B-\prod_{k=1}^{N-1}f_{kN}^Cf_{Nk}^B\right)\times\\
&\times\mathbb{F}_{N-1}^\ordo(\{\lambda^C\}_{N-1},\{\lambda^B\}_{N-1}).
\end{split}
\end{equation}
By taking the symmetric, diagonal limit of this quantity, one obtains the symmetric form factor
\begin{equation}
\mathbb{F}_{N,s}^\ordo(\{\lambda\}_N)=\lim_{\epsilon\rightarrow 0}\mathbb{F}_N^\ordo(\{\lambda+\epsilon\}_N,\{\lambda\}_N).
\end{equation}

\begin{thm}
The symmetric form factor of the operator $\ordo$ is equal to its
expectation value in the case when every $z_j$ is zero: 
\begin{equation}
\mathbb{F}^\ordo_{N,s}(\{\lambda\}_N)=\vev{\ordo}_N(\{\lambda\}_N,\{0\}_N).
\end{equation}
\end{thm}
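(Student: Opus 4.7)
The plan is to prove the identity by exchanging the order of two operations: substituting the Bethe-equation expressions for the auxiliary variables $l_j$, and taking the symmetric diagonal limit in the rapidities. The key structural input is that $f(u,v)=1/b(u-v)$ depends only on rapidity differences, so the Bethe-equation expressions for $l_j$ are invariant under a uniform shift of all rapidities.

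First I would note that $\mathbb{F}_N^\ordo(\{\lambda^C\}_N,\{\lambda^B\}_N)$ is by construction $M_N^\ordo(\{\lambda^C\}_N,\{\lambda^B\}_N,\{l^C\}_N,\{l^B\}_N)$ with each $l^{C,B}_j$ replaced by the Bethe expression $\prod_{k\ne j} f(\lambda^{C,B}_j,\lambda^{C,B}_k)/f(\lambda^{C,B}_k,\lambda^{C,B}_j)$. Under the symmetric shift $\lambda^C_j=\lambda_j+\epsilon$, $\lambda^B_j=\lambda_j$, the translation invariance just mentioned gives $l^C_j=l^B_j$ identically in $\epsilon$. Therefore the symmetric form factor equals $\lim_{\epsilon\to 0} M_N^\ordo(\{\lambda_j+\epsilon\}_N,\{\lambda_j\}_N,\{L\}_N,\{L\}_N)$ with the common values $L_j=\prod_{k\ne j} f(\lambda_j,\lambda_k)/f(\lambda_k,\lambda_j)$.

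Next, I would identify this limit with the diagonal evaluation $M_{N,d}^\ordo$ at $\{z\}_N=\{0\}_N$. By the definition of $M_{N,d}^\ordo$ as the symmetric limit of $M_N^\ordo$, the $z$-parameters encode the linear response of $l^C_j-l^B_j$ along the shift via $z_j=i\partial_\lambda\log l(\lambda_j)$; an identical equality $l^C_j=l^B_j$ with no $\epsilon$-dependence therefore corresponds precisely to $z_j=0$ for all $j$. Combining with the definition $\vev{\ordo}_N(\{\lambda\}_N,\{z\}_N)=M_{N,d}^\ordo(\{\lambda\}_N,\{l\}_N,\{z\}_N)$ with $l$'s fixed by the Bethe equations, this yields $\mathbb{F}_{N,s}^\ordo(\{\lambda\}_N)=\vev{\ordo}_N(\{\lambda\}_N,\{0\}_N)$.

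The main obstacle is rigorously justifying that the simultaneous coalescence of $N$ apparent poles of $M_N^\ordo$ at $\epsilon=0$ produces a finite limit coinciding with the $z=0$ evaluation of $M_{N,d}^\ordo$. I would handle this by induction on $N$, using the singularity theorem: the pole at $\lambda_N^C\to\lambda_N^B$ has residue proportional to $l^C_N-l^B_N$, which vanishes under the Bethe substitution by translation invariance, making this pole removable and reducing the problem to an $(N-1)$-particle matrix element on the modified vacuum \eqref{admod}; the inductive hypothesis then controls the remaining coincidences. The base case $N=0$ is trivial, both sides being equal to the vacuum expectation value $\vev{\ordo}$.
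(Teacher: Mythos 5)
Your proposal is correct and follows essentially the same route as the paper: both rest on the observation that the Bethe-equation values of the $l_j$ are products of $S$-matrices depending only on rapidity differences, so under the uniform shift $\lambda_j\to\lambda_j+\epsilon$ one has $l^C_j=l^B_j$ identically in $\epsilon$, which is precisely the $z_j=0$ evaluation of the diagonal matrix element. One small caveat on your final paragraph: the vanishing of the residue $\propto l^C_N-l^B_N$ makes the apparent pole removable, but it does not by itself reduce $M_N^\ordo$ to an $(N-1)$-particle matrix element (the singularity property \eqref{sing_prop} controls only the singular part, not the regular remainder); the paper's proof avoids this by directly identifying the two $\epsilon\to 0$ limits once the $l$-sets are seen to coincide, which is all that is needed.
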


\begin{proof}
  From \eqref{sing_prop} it is obvious that the $z$ dependence of the expectation value arises from the rapidity dependence of the $l(\lambda)$ function. This means that the $z$ independent, irreducible part can be obtained by choosing $l(\lambda_j^C)=l(\lambda_j^B)$, where $\{l^B\}_N$ solves the Bethe equations:
\begin{equation}
\begin{split}
&\vev{\ordo}_N(\{\lambda\}_N,\{0\}_N)=\\
&=\lim_{\epsilon\rightarrow 0}M_N^\ordo(\{\lambda^B+\epsilon\}_N,\{\lambda^B\}_N,\{l^B\}_N,\{l^B\}_N).
\end{split}
\end{equation}
On the other hand the symmetric form factor is by definition:
\begin{equation}
  \mathbb{F}^\ordo_{N,s}(\{\lambda\}_N)
  =\lim_{\epsilon\rightarrow 0}M_N^\ordo(\{\lambda^B+\epsilon\}_N,\{\lambda^B\}_N,\{\tilde{l}^B\}_N,\{l^B\}_N),
\end{equation}
where both $\{l^B\}_N$ and  $\{\tilde{l}^B\}_N$ solve the Bethe
equations. But this means that the elements of $\{l^B\}_N$ and
$\{\tilde{l}^B\}_N$ are the products of the appropriate S matrices, so 
\begin{equation}
\{l^B\}_N=\{\tilde{l}^B\}_N.
\end{equation}
This completes the proof.
\end{proof}

We define the $S$ function for an arbitrary bi-partition of the set $\{\lambda\}_N=\{\lambda^+\}_n\cup\{\lambda^-\}_{N-n}$ in the following way:
\begin{equation}
\begin{split}
& S_N(\{\lambda^+\}_n,\{\lambda^-\}_{N-n},\{z^-\}_{N-n})=\big(\sinh(\eta)\big)^{N-n}\times\\
& \times\left(\prod_{\lambda_j\in\{\lambda^+\}}\prod_{\lambda_k\in\{\lambda^-\}}f_{jk}^{+-}f_{kj}^{-+}\right)\left(\prod_{1\leq j<k\leq n}f_{jk}^{--}f_{kj}^{--}\right)\times\\
&\times\rho(\{\lambda^-\}_{N-n},\{z^-\}_{N-n}),
\end{split}
\end{equation}
where $f_{jk}^{+-}=f(\lambda_j^+,\lambda_k^-)$. This function depends
on $z_N$ only through the Gaudin determinant, and its dependence can
be calculated easily by expanding the determinant with respect to its
$N^{\textrm{th}}$ row or column. Doing so one obtains that 
\begin{widetext}
\begin{equation}
\label{derivative_S}
\begin{split}
\frac{1}{\sinh(\eta)}\frac{\partial}{\partial z_N}S_N&(\{\lambda^+\}_n,\{\lambda^-\}_{N-n},\{z\}_N)=\\
&= \left\{ \begin{array}{ll}
\left(\prod\limits_{k=1}^{N-1}f_{kN}f_{Nk}\right)S_{N-1}(\{\lambda^+\}_n,\{\lambda^-\}_{N-n-1},\{z_{mod}^-\}_{N-n-1})&\lambda_N\in\{\lambda^-\}\\
0& \lambda_N\in\{\lambda^+\}
\end{array}
\right..
\end{split}
\end{equation}
\end{widetext}
It is important to notice that if we take all the $z$-s in the argument of $S$ to zero, we get zero:
\begin{equation}
S_N(\{\lambda^+\}_n,\{\lambda^-\}_{N-n},\{0\}_{N-n})=0.
\end{equation}
This follows from the fact that in this case the Gaudin determinant is
zero, which follows from Theorem \ref{matrix-tree-thm}.

With the help of the relations listed above we are now in the position to prove the following theorem:
\begin{thm}
The un-normalized mean value of an arbitrary operator $\ordo$ in any
eigenstate of the system can be calculated in the following way: 
\begin{equation}
\label{thm_expectval}
\begin{split}
&\vev{\ordo}_N({\{\lambda\}_N})=\\
&=\sum_{\{\lambda^+\}\cup\{\lambda^-\}}\!\!\!
\mathbb{F}_{s}^\ordo(\{\lambda^+\})S_N(\{\lambda^+\},\{\lambda^-\},\{z^-\}),
\end{split}
\end{equation}
where the summation goes over every bi-partition of the set of
rapidities, $\{\lambda\}_N$ and for simplicity we did not denote the
cardinality of the sets $\{\lambda^\pm\}$ separately.
\end{thm}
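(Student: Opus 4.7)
The plan is to proceed by induction on the particle number $N$, exploiting that both sides of the claimed identity are multi-linear and symmetric polynomials in the auxiliary variables $\{z\}_N$, and that they share the same derivative structure inherited from \eqref{sing_prop}. The base case $N=0$ is immediate: both sides collapse to the vacuum expectation value $\bra{0}\ordo\ket{0}$.

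For the inductive step I first record the two structural properties of the difference $\Delta(\{\lambda\}_N,\{z\}_N)=\text{LHS}-\text{RHS}$. First, $\Delta$ is linear in each $z_j$ separately: the LHS because the singularity property \eqref{sing_prop} forces linear dependence on every $z_j$ in the diagonal limit, and the RHS because $\mathbb{F}_s^\ordo$ is $z$-independent while $S_N$ depends on $z_j$ only through a Gaudin determinant, which is multi-linear in its diagonal entries by Theorem \ref{matrix-tree-thm}. Second, $\Delta$ is symmetric under simultaneous exchange of the pairs $(\lambda_j,z_j)$, since the $\mathbb{B}$ operators commute on the left, and the sum over bi-partitions together with the symmetric structure of $\mathbb{F}_s$ and $S_N$ ensure the same symmetry on the right.

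The core computation is the matching of $\partial/\partial z_N$ of both sides. Using \eqref{expval_derivative} the LHS differentiates to $\sinh(\eta)\bigl(\prod_{k=1}^{N-1} f_{kN}f_{Nk}\bigr)\vev{\ordo}_{N-1}(\{\lambda\}_{N-1},\{z_{mod}\}_{N-1})$. On the RHS only bi-partitions with $\lambda_N\in\{\lambda^-\}$ contribute to the $z_N$-derivative, and \eqref{derivative_S} pulls out exactly the same prefactor from each such summand, leaving behind $\mathbb{F}_s^\ordo(\{\lambda^+\})S_{N-1}(\{\lambda^+\},\{\lambda^-\}\setminus\{\lambda_N\},\{z_{mod}^-\})$. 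Since bi-partitions of $\{\lambda\}_N$ with $\lambda_N$ placed in the $-$ block are in bijection with arbitrary bi-partitions of $\{\lambda\}_{N-1}$, the residual sum is precisely the RHS for $N-1$ particles with the shifted arguments $\{z_{mod}\}_{N-1}$, which by the induction hypothesis equals $\vev{\ordo}_{N-1}(\{\lambda\}_{N-1},\{z_{mod}\}_{N-1})$. Thus $\partial\Delta/\partial z_N=0$; by the symmetry recorded above the same holds for every $\partial\Delta/\partial z_j$, and multi-linearity then forces $\Delta$ to be a constant function of $\{z\}_N$.

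It remains to evaluate $\Delta$ at $\{z\}_N=\{0\}_N$. The preceding theorem gives $\text{LHS}=\mathbb{F}_{N,s}^\ordo(\{\lambda\}_N)$ at this point. On the RHS, the Gaudin determinant $\rho(\{\lambda^-\},\{0\})$ vanishes for every non-empty $\{\lambda^-\}$: by Theorem \ref{matrix-tree-thm} the matrix-tree expansion requires at least one non-zero diagonal entry per tree to serve as a root, so the determinant vanishes when every $z_j=0$. Hence only the term $\{\lambda^+\}=\{\lambda\}_N$, $\{\lambda^-\}=\emptyset$ survives, also giving $\mathbb{F}_{N,s}^\ordo(\{\lambda\}_N)$, so $\Delta\equiv 0$. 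The main obstacle I anticipate is the combinatorial bookkeeping in the derivative matching: tracking the bijection of bi-partitions after removing $\lambda_N$, verifying that the prefactor $\prod_{k=1}^{N-1}f_{kN}f_{Nk}$ factors out uniformly across all summands, and checking that the shifted vacuum data \eqref{admod} translate consistently into the shift $z\to z_{mod}$ appearing in the induction hypothesis.
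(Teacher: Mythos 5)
Your proposal is correct and follows essentially the same route as the paper's own proof: induction in $N$, multilinearity of both sides in the $z_j$, matching of the $z$-derivatives via \eqref{expval_derivative} and \eqref{derivative_S} together with the induction hypothesis, and fixing the $z$-independent part at $\{z\}_N=\{0\}_N$ using $\vev{\ordo}_N(\{\lambda\}_N,\{0\}_N)=\mathbb{F}_{N,s}^\ordo(\{\lambda\}_N)$ and the vanishing of $S_N$ at zero $z$ from the matrix-tree theorem. The only cosmetic differences are your base case $N=0$ in place of the paper's $N_{min}$ discussion and your use of permutation symmetry to reduce to the single derivative $\partial_{z_N}$, neither of which changes the substance of the argument.
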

\begin{proof}

We use induction in the particle number $N$. Let us
consider both sides as the multi-linear function of the $z_j$
variables. Our goal is to show that both sides depend the same way on
every $z_j$ and that their $z$ independent parts are also equal. We
look at the first $N$ for which the matrix element defined by
\eqref{matrix_element} is not zero, and we denote this number by $N_{min}$. In
the case when $N<N_{min}$ both sides of \eqref{thm_expectval} are
zero, therefore the equation is satisfied. If $N=N_{min}$ there is
only one non-zero term in the summation in the r.h.s., namely when
$\{\lambda^+\}=\{\lambda\}_N$. This means that on the r.h.s. there is
only $\mathbb{F}^\ordo_{N,s}(\{\lambda\}_N)$, since
$S(\{\lambda^+\},\emptyset,\emptyset)=1$, therefore it is $z$
independent. The l.h.s. is also independent of $z$, which follows from
\eqref{expval_derivative}. But the $z$ independent parts are equal
according to the previously proved theorem. Now let us suppose that
\eqref{thm_expectval} is satisfied for every $N<M$, and examine the
$N=M$ case! In the r.h.s. only those terms depend on $z_j$ where
$z_j\in\{\lambda^-\}$. By taking the partial derivative of it with
respect to $z_j$, the initial summation is getting modified to a new
one, going over all the bi-partitions of the set
$\{\lambda\}_{N-1}=\{\lambda\}_N\backslash \{\lambda_j\}$ (this
follows from \eqref{derivative_S}). According to the induction
assumption this sum gives exactly
$\vev{\ordo}_{N-1}(\{\lambda\}_{N-1},\{z_{mod}\}_{N-1})$, multiplied by $\sinh(\eta)\prod_{k\neq j}f_{kj}f_{jk} $. But using \eqref{expval_derivative}, this demonstrates that the $z$ dependence of the two sides are equal. To
investigate the part independent of the $z$ variables, we have to
take all of them to zero. In this case on the r.h.s. only
$\mathbb{F}_{N,s}^\ordo(\{\lambda\}_N)$ remains. Since
$\vev{\ordo}_N(\{\lambda\}_N,\{0\}_N)=\mathbb{F}_{N,s}^\ordo(\{\lambda\}_N)$,
the two sides are equal. This completes the proof.
\end{proof}

The previously introduced \eqref{fftcsa2} equation is equivalent to
\eqref{thm_expectval}: one obtains the former after
dividing with the norm of the Bethe state given by \eqref{Bethenorm}. 

\section{Connection to the theory of factorized correlation functions}

\label{sec:factorized}

In the case of the XXZ and XXX spin chains our results for the current
mean values are directly related to certain objects in the theory of
factorized correlation functions. In the following we describe this
connection. 

Factorization of correlation functions concerns the equilibrium mean
values of local operators (for a review, see
\cite{kluemper-goehmann-finiteT-review}). Factorization means that any
multi-point 
correlator can be expressed as sums of products of simple building
blocks, which are derived from the two-site density matrix in an
inhomogeneous spin chain. On a practical level the theory consists of
two parts: the algebraic part and the physical part. The algebraic
part deals with the factorization on the level of the operators, and
is independent of the actual physical situation. On the other hand,
the information about the concrete situation is supplied by the
physical part of the computation.

The theory was worked out first for the cases of thermal equilibrium
in infinite volume and for the ground states in finite volume
\cite{physpart,XXZ-massless-corr-numerics-Goehmann-Kluemper,XXZ-massive-corr-numerics-Goehmann-Kluemper,goehmann-kluemper-finite-chain-correlations}. In
\cite{sajat-corr} a conjecture was formulated for the physical part in
any excited equilibrium state in infinite volume; these formulas have been
used to compute the steady state properties after global
quenches. Finally, it was argued in \cite{sajat-corr2} that the known
results for the finite volume ground states
\cite{goehmann-kluemper-finite-chain-correlations} can be extended naturally
to all finite volume excited states, and this leads to the proof of
the formulas of \cite{sajat-corr} in the thermodynamic limit.

In the following we summarize the main statements in the case of the
XXX chain.

The Bethe vectors are highest weight vectors with respect to the
$SU(2)$ symmetry. Let us consider a Bethe state with $\vev{S_z}=0$, thus $N=L/2$. All
such states are $SU(2)$ singlets.

The theory of factorized correlations states
\cite{goehmann-kluemper-finite-chain-correlations,sajat-corr2} that in
the $SU(2)$  singlet states
any multi-point correlation function can be expressed using only a
single generating function $\Psi(x_1,x_2)$ that depends on the excited
state in question. Defining the coefficients
\begin{equation}
 \Psi_{n,m}=\partial_{x_1}^n \partial_{x_2}^m \Psi(x_1,x_2)|_{x_1,x_2=0}
\end{equation}
it can be shown that all correlators can be expressed as finite
combinations of the quantities $\Psi_{n,m}$.
The algebraic part of
the computation expresses the correlators in terms of $\Psi_{n,m}$,
whereas the physical part supplies their actual values, depending on
the Bethe state in question. 

For example the simplest $z-z$
correlators can be expressed as
\begin{align}
\label{sig13}
    \vev{\sigma_{1}^z\sigma_2^z}=&
\frac{1}{3}(1-\Psi_{0,0})\\
   \vev{\sigma_{1}^z\sigma_3^z}
=&\frac{1}{3}(1-4\Psi_{0,0}+\Psi_{1,1}-\frac{1}{2}\Psi_{2,0})\\
\label{sig14}
\begin{split}
  \vev{\sigma_{1}^z\sigma_4^z}=&
\frac{1}{108} (36  +  288 \Psi_{1, 1}   - 15 \Psi_{2, 2} +   10
\Psi_{3, 1}+\\
& +  \Psi_{2, 0}( - 156 + 12 \Psi_{1, 1}  - 6 \Psi_{2, 0})
\\
&+   2 \Psi_{0, 0} (-162  - 42 \Psi_{1, 1}  +   3 \Psi_{2, 2} - 2 \Psi_{3, 1})+\\
&+
\Psi_{1,0}(84 \Psi_{1, 0} - 12 \Psi_{2, 1}+    4 \Psi_{3, 0})).
\end{split}
\end{align}

It was shown in \cite{sajat-corr2} that for the Bethe state
$\ket{\{\lambda\}_N}$ 
the generating function reads
\begin{equation}
\label{psi}
  \Psi(x_1,x_2)=2 {\bf q}(x_1)\cdot G^{-1}\cdot {\bf q}(x_2),
\end{equation}
where ${\bf q}(x_1)$ is a parameter dependent vector of length $N$
with elements ${\bf q}_j(x)=q(\lambda_j-x)$ with $q(\lambda)=1/(\lambda^2+1/4)$.
Furthermore, $G$ is the Gaudin matrix, which now takes the form
\begin{equation}
\label{GaudinXXX}
  G_{jk}=\delta_{jk}\left(
L \frac{1}{u_j^2+1/4}+\sum_{l=1}^N \varphi(u_{jl})
\right)-\varphi(u_{jk}),
\end{equation}
with
\begin{equation}
  \label{phidefXXX}
\varphi(u)=-\frac{2}{u^2+1}.
\end{equation}
The factor of 2 is included in \eqref{psi} to conform with earlier
conventions, see \cite{sajat-corr2}.

It follows that the $\Psi_{n,m}$ coefficients can be expressed as
\begin{equation}
  \Psi_{n,m}=2 {\bf q}_{n+2}\cdot G^{-1}\cdot {\bf q}_{m+2}.
\end{equation}
The shifts in the indices are due to our conventions, namely that the
first member of the series of the charges 
is called $Q_2$.

Comparing to \eqref{mostgeneral} we see that the current mean values
are
\begin{equation}
  \label{mostgeneralXXX}
  \bra{\{\lambda\}_N}J_\alpha^\beta(x)  \ket{\{\lambda\}_N}=
\frac{1}{2} \Psi_{\beta-1,\alpha-2}.
\end{equation}
This gives a new interpretation for the generalized currents:
they are special operators that are represented by a single
$\Psi_{n,m}$.

In the case of non-singlet states the situation is more involved, and
the correlators also involve the so-called moments (see
\cite{boos-goehmann-kluemper-suzuki-fact7,sajat-corr2}). Nevertheless,
the mean values of $SU(2)$ invariant operators are still described by
the $\Psi_{n,m}$, and this is in accordance with the fact that in the
XXX model the charges $Q_\alpha$ and their currents are also
$SU(2)$-invariant.

In the case of the XXZ model spin-flip invariant local correlations
are described by two generating functions (traditionally denoted as
$\omega(x,y)$ and $\omega'(x,y)$, see
\cite{physpart,XXZ-massless-corr-numerics-Goehmann-Kluemper,XXZ-massive-corr-numerics-Goehmann-Kluemper,goehmann-kluemper-finite-chain-correlations}). It
was shown in \cite{sajat-corr2} that for the finite volume excited
states the function $\omega(x,y)$ has a form analogous to \eqref{psi},
thus the Taylor coefficients of this function describe the generalized
current operators in the XXZ model.

At present we don't have an explanation of the observed coincidences
between the factorization and the current mean values. Understanding
this connection might 
lead to an independent proof of our results, at least in the XXZ and
XXX models.

\section{Conclusions and discussion}

\label{sec:conclusions}

We computed an exact finite volume formula for the
current mean values in Bethe Ansatz solvable systems. The main results
are eqs. \eqref{main} and \eqref{mostgeneral}.
We did not treat the direct
thermodynamic limit of these results, but their functional form
implies that they reproduce the previously conjectured 
infinite volume formulas (see eq. \eqref{maskepp} and the discussion
there). We have thus supplied a rigorous foundation for the
treatment of ballistic transport in Generalized Hydrodynamics.

Perhaps the most interesting finding of this work is that the
semi-classical result for the currents remains exact in the
interacting quantum theory, even with a finite number of particles. Ultimately this boils down to the
two-particle reducibility of the Bethe Ansatz wave function
(see eq. \eqref{psibethe}), which also enables the applicability of
the flea gas model to simulate the dynamics \cite{flea-gas}.

Our rigorous proof is rather
general and it relies on a model independent form factor expansion
(Theorem \ref{fftcsathm}). However, the expansion itself has to be proven
separately and the techniques to be applied might vary.
Here we provided a proof for the case of the Heisenberg spin chains.
Whereas we did not treat the Lieb-Liniger
model (1D Bose gas) here, the expansion theorem can be worked out with
essentially the same techniques (see for example \cite{sajat-LM}), at
least for those local charges and currents which have well-defined
real space representations \cite{korepin-LL-higher}.
For integrable QFT the expansion was proven earlier in
\cite{bajnok-diagonal}.

It would be highly desirable to develop an alternative, more transparent
proof for the current mean values. For special cases this can indeed be achieved,
for example the spin current of the XXZ model can be computed using
form factor perturbation theory \footnote{We developed a short unpublished proof 
 in collaboration with Lorenzo Piroli.}, using similar ideas to those
of \cite{kluemper-spin-current}. Nevertheless a simple and general proof
is not available. In the special case of the XXZ and XXX models the
connection to the factorized correlation functions might lead to a
more transparent derivation.

In this work we restricted ourselves to local charges on the lattice. However, it
is known that there exist quasi-local charges that are essential for
the GGE and thus for GHD
\cite{prosen-enej-quasi-local-review,JS-CGGE}.
For quasi-local charges our results hold asymptotically, and the
finite volume formulas \eqref{main}-\eqref{mostgeneral} receive
exponentially small corrections. These are due to the 
long-range contributions to the operators with exponentially
decreasing amplitudes.

It would be interesting to extend our results to multi-component
models solvable by the nested Bethe Ansatz. In these systems much less
is known about local correlations, and the current operators are good
candidates to obtain exact analytic results. In turn, this would give
a rigorous foundation for the hydrodynamic treatment of these models \cite{jacopo-enej-hubbard}. Also, it would be
interesting to consider models without the $U(1)$ symmetry responsible
for particle conservation. The integrable XYZ spin
chain is such a model, where the canonical family of conserved charges
\cite{GM-higher-conserved-XXZ} does not include the $S_z$
operator. Nevertheless there might exist simple exact results for the
current operators, and thus also for GHD. We plan to return to these
questions in future research.

\begin{acknowledgments}
We are grateful to Bruno Bertini, Benjamin Doyon, Enej Ilievski, M\'arton Kormos,
M\'arton Mesty\'an, Lorenzo Piroli, G\'abor Tak\'acs, Eric Vernier, 
Dinh-Long Vu and Takato Yoshimura for useful discussions.
We are thankful in particular to Bruno Bertini and Lorenzo Piroli for drawing
our attention to this problem.
This research was supported by the BME-Nanotechnology FIKP grant
of EMMI (BME FIKP-NAT), by the National Research Development
and Innovation Office (NKFIH) (K-2016 grant no. 119204 and the KH-17
grant no. 125567), and by the ``Premium'' Postdoctoral 
Program of the Hungarian Academy of Sciences.
  \end{acknowledgments}

\appendix

\section{The energy current operator}

\label{sec:special}

Here we investigate a special case of the main formulas
\eqref{main}-\eqref{mostgeneral}:  we consider the energy current $J_H$, which is itself a
conserved operator in integrable lattice models 
In fact, it is proportional to the canonical charge $Q_3$. For
completeness we re-derive this correspondence, and show that
\eqref{main} reproduces the mean values of $Q_3$.

The continuity equation for the energy flow is
\begin{equation}
  [H,h(x)]=i( J_H(x+1)- J_H(x)),
\end{equation}
where $h(x)$ is the Hamiltonian density. It follows that
\begin{equation}
  \label{JHH}
  J_H(x+1)=-i[h(x+1),h(x)].
\end{equation}

Using the definition of the boost operator \eqref{boostdef} and the proportionality
relation \eqref{q2kappa} we compute the canonical $Q_3$ as
\begin{equation}
  \label{Q3boost}
  \begin{split}
    Q_3&
    =i[B,Q_2]=i\kappa^2 [\sum_x xh(x),\sum_y h(y)]=\\
&= i\kappa^2 \sum_x  [h(x+1),h(x)].
  \end{split}
\end{equation}
Therefore we can identify
\begin{equation}
  \label{q3jh}
  Q_3(x)=-\kappa^2 J_H(x).
\end{equation}
The proportionality factor $-\kappa^2$ originates simply in our conventions.

The one-particle eigenvalues of $Q_3$ are obtained from the transfer
matrix construction: \eqref{qalfami} and \eqref{q2} give
\begin{equation}
  q_3(\lambda)=q_2'(\lambda)=\kappa e'(\lambda).
\end{equation}

For $J_H$  \eqref{main} takes the form
\begin{equation}
    \bra{\{\lambda\}_N}J_H(x)  \ket{\{\lambda\}_N}=
{\bf e'} \cdot G^{-1} \cdot {\bf e}.
\end{equation}
We will show that after re-scaling this formula gives the expected mean
value of $Q_3$. 

Let us take an $N$-dimensional vector ${\bf u}$ whose elements are
equal to 1. It follows from the definition of the Gaudin matrix
\eqref{Gaudin} that 
\begin{equation}
  \label{Gvp}
  G {\bf u}=L{\bf p'}.
\end{equation}
Multiplying with the inverse and using $e(\lambda)=q_2(\lambda)/\kappa=-p'(\lambda)/\kappa$
we get
\begin{equation}
  {\bf e'} \cdot G^{-1} \cdot {\bf e}  =-\frac{1}{\kappa L}{\bf e'}\cdot {\bf u}=
-\frac{1}{\kappa^2 L}  \sum_{j=1}^N q_3(\lambda_j).
\end{equation}
With this we have indeed obtained
\begin{equation}
  \label{Q3JH}
    \bra{\{\lambda\}_N}Q_3(x)  \ket{\{\lambda\}_N}=
-\kappa^2  \bra{\{\lambda\}_N}J_H(x)  \ket{\{\lambda\}_N},
\end{equation}
as expected from \eqref{q3jh}.

Eq. \eqref{Q3JH} is a special case of a more general symmetry relation.
It follows from \eqref{qalfami} and the symmetry of the Gaudin matrix,
that the following mean values are equal:
\begin{equation}
  \label{Jszimm}
  \bra{\{\lambda\}_N}J_\alpha^\beta(x)  \ket{\{\lambda\}_N}=
  \bra{\{\lambda\}_N}J_{\beta+1}^{\alpha-1}(x)  \ket{\{\lambda\}_N}.
\end{equation}
This interesting relation was already observed in GHD in
\cite{benjamin-takato-note-ghd}, but a more direct explanation is not
yet known.

We stress that the relation above does not mean that the operators
$J_\alpha^\beta(x)$ and $J_{\beta+1}^{\alpha-1}(x)$ are
 equal; they might differ in total derivatives.
The concrete form of the current operators
depends on the choice of the charge density representing the global
charge operator, and the gauge freedom \eqref{Qgauge} leaves room for re-definitions.
It would be interesting to see whether there is a gauge choice, which
would guarantee that the above
relation holds on the level of the operators.

\begin{widetext}

\section{Explicit formulas for charges and currents in the XXX chain}

\label{sec:xxxex}

In the XXX Heisenberg chain the canonical $Q_2$ operator 
\eqref{Qalfadef} is
\begin{equation}
   Q_2 = \frac{1}{2} \sum_{x=1}^L \left(\underline{\sigma}_x \cdot \underline{\sigma}_{x+1}-1 \right).
\end{equation}
Here we used the short-hand notation that $\underline{\sigma}_x$ is a
vector of operators $(\sigma^x,\sigma^y,\sigma^z)$ acting on site $x$.
Note that $Q_2$  is one half of the Hamiltonian.

Further charges can be computed easily using the boost operator
\eqref{boostdef}. After explicit computations we find
\begin{align}
       Q_3 &= \sum_{x=1}^L \left(- \frac{1}{2}\right)( \underline{\sigma}_x \times \underline{\sigma}_{x+1} ) \cdot \underline{\sigma}_{x+2} \\
    Q_4 &= \sum_{x=1}^L \Big( [ ( \underline{\sigma}_x \times \underline{\sigma}_{x+1} ) \times \underline{\sigma}_{x+2} ] \cdot \underline{\sigma}_{x+3} - 2 \underline{\sigma}_{x} \cdot \underline{\sigma}_{x+1} + \underline{\sigma}_{x} \cdot \underline{\sigma}_{x+2}+1 \Big),
\end{align}
where the cross denotes the vectorial cross product. Further examples and a
closed form result for all $Q_\alpha$ can be found in \cite{GM-higher-conserved-XXX}

Let us define the generalized current operators $J_\alpha^\beta$
according to the definition \eqref{Jab}, with the charge densities as
given above. Then the real space representations for the first few
currents are
\begin{align}
    J_2^2(x) = &\frac{1}{2} ( \underline{\sigma}_{x-1} \times \underline{\sigma}_{x} ) \cdot \underline{\sigma}_{x+1} \\
    J_3^2(x) = &- [ ( \underline{\sigma}_{x-1} \times \underline{\sigma}_{x} ) \times \underline{\sigma}_{x+1} ] \cdot \underline{\sigma}_{x+2} - \underline{\sigma}_{x} \cdot \underline{\sigma}_{x+1}+1 \\
    J_2^3(x) = &- \frac{1}{2} \Big( [ ( \underline{\sigma}_{x-2} \times \underline{\sigma}_{x-1} ) \times \underline{\sigma}_{x} ] \cdot \underline{\sigma}_{x+1} + [ ( \underline{\sigma}_{x-1} \times \underline{\sigma}_{x} ) \times \underline{\sigma}_{x+1} ] \cdot \underline{\sigma}_{x+2} \Big) + \\
    &+ \underline{\sigma}_{x} \cdot \underline{\sigma}_{x+1} + \underline{\sigma}_{x-1} \cdot \underline{\sigma}_{x} - \underline{\sigma}_{x-1} \cdot \underline{\sigma}_{x+1}-1. \nonumber
\end{align}
We can observe the equality $J_2^2(x)=-Q_3(x)$, in accordance
with the previous section. Similarly, we find the interesting relation
\begin{equation}
  Q_4=-\sum_{x=1}^L J_2^3(x), 
\end{equation}
which is an analogous identity that follows from \eqref{Jszimm}; the
minus sign is simply a matter of convention.

\section{The summation of the form factor expansion}

\label{sec:adjproof}

In order to sum up the expansion for the current mean values we write the r.h.s. of \eqref{fftcsa2J} in a slightly different way
\begin{equation}
\sum\limits_{\{\lambda^+\}\cup\{\lambda^-\}} \FJa(\{\lambda^+\}) \rho({\lambda^-})=
{\bf e'} \cdot A \cdot {\bf q}_\alpha,
\end{equation}
where $\FJa(\{\lambda\})$ are the symmetric diagonal form factors of
the currents and we defined
\begin{equation}
A_{jk}=\sum\limits_{\substack{{\{\lambda^+\}\cup\{\lambda^-\}}\\ \lambda_{j,k}\in\{\lambda^+\}}}\left(\sum_{\mathcal{T}}  \prod_{l_{pq}\in \mathcal{T}} \varphi_{pq}\right)\rho({\lambda^-}).
\end{equation}
Here the summation runs over $\mathcal{T}$, the non-directed spanning
trees of $\{\lambda^+\}$. Using this notation our task is to show that
$\sum_{l=1}^N G_{jl} A_{lk}=\delta_{jk}\cdot\det G$.

This matrix product can be written out explicitly using the matrix-tree theorem for the Gaudin determinant
\begin{equation}
\label{AGprod}
\begin{split}
& \sum_{l=1}^N\left(\left[ \delta_{jl}\big[p'_jL+\sum_{s=1}^N \varphi_{js}
    \big]-\varphi_{jl}\right] \times
  \sum_{\substack{\{\lambda^+\}\cup\{\lambda^-\}\\ \lambda_{k,l}\in\{\lambda^+\}}}\left(\sum_{\mathcal{T}}  \prod_{l_{pq}\in \mathcal{T}} \varphi_{pq}\right)
 \left( \sum_{\mathcal{F}} \prod_{n\in R(\mathcal{F})} p'_nL
  \prod_{l_{uv}\in \mathcal{F}} \varphi_{uv}\right) \right),
\end{split}
\end{equation}
where $\mathcal{F}$ denotes the directed spanning forests of
$\{\lambda^-\}$. First we show that the diagonal elements of this
matrix product give $\det G$. To do this we consider the $j=k$ case in
\eqref{AGprod} and we split the outer summation over $l$ into two parts: to
the $l=j$ and $l\neq j$ cases.

In the case of $l=j$ the sum of
the $\varphi$ terms from the Gaudin matrix appears, multiplying the appropriate
terms in $A$. These can be written in the following way: 
\begin{equation}
\begin{split}
& \left(\sum_{\substack{s=1\\s\neq j}}^N\varphi_{js}\right)\sum_{\substack{\{\lambda^+\}\cup\{\lambda^-\}\\ \lambda_j\in\{\lambda^+\}}}\left(\sum_{\mathcal{T}}  \prod_{l_{pq}\in \mathcal{T}} \varphi_{pq}\right)\left( \sum_{\mathcal{F}} \prod_{n\in R(\mathcal{F})} p'_nL
  \prod_{l_{uv}\in \mathcal{F}} \varphi_{uv}\right)=\\
& =\sum_{\substack{s=1\\ s\neq j}}\left[\varphi_{js}\sum_{\substack{\{\lambda^+\}\cup\{\lambda^-\}\\ \lambda_{j,s}\in \{\lambda^+\}}}\Big(\dots\Big)\Big(\dots\Big)+\varphi_{js}\sum_{\substack{\{\lambda^+\}\cup\{\lambda^-\}\\ \lambda_{j}\in \{\lambda^+\}\\ \lambda_{s}\in \{\lambda^-\}}}\Big(\dots\Big)\Big(\dots\Big)\right].
\end{split}
\end{equation}
Here in the second line we denoted the terms inside the summation with $(\dots)$ for brevity. Using this equation and renaming the summation variable $l$ to $s$ in \eqref{AGprod}, we can finally arrive to the following expression for the diagonal elements of the matrix product \eqref{AGprod}:
\begin{equation}
\label{AGprod_diag}
\begin{split}
& p_j'L\sum_{\substack{\{\lambda^+\}\cup\{\lambda^-\}\\ \lambda_j\in\{\lambda^+\}}}\left(\sum_{\mathcal{T}}  \prod_{l_{pq}\in \mathcal{T}} \varphi_{pq}\right)
 \left( \sum_{\mathcal{F}} \prod_{n\in R(\mathcal{F})} p'_nL
  \prod_{l_{uv}\in \mathcal{F}} \varphi_{uv}\right)\\
& +\sum_{\substack{s=1\\ s\neq j}}^N\left(\varphi_{js} 
\sum_{\substack{\{\lambda^+\}\cup\{\lambda^-\}\\ \lambda_j\in\{\lambda^+\} \\ \lambda_s\in\{\lambda^-\}}}\left(\sum_{\mathcal{T}}  \prod_{l_{pq}\in \mathcal{T}} \varphi_{pq}\right)
 \left( \sum_{\mathcal{F}} \prod_{n\in R(\mathcal{F})} p'_nL
  \prod_{l_{uv}\in \mathcal{F}} \varphi_{uv}\right)  \right).
\end{split}
\end{equation}
Looking at this expression from the graph theoretical point of view,
and using the matrix-tree theorem it can be shown that this is indeed
$\det G$: in the first term for every bi-partition the summation over
$\mathcal{T}$ (together with the $p_j'L$ factor) gives the
contribution of every such directed spanning tree of $\{\lambda^+\}$
in which $\lambda_j$ is the root. Together with the summation over
$\mathcal{F}$, which is just the contribution from every directed
spanning forest of $\{\lambda^-\}$, the first term contains all such
directed spanning forests of $\{\lambda\}$ in which $\lambda_j$ is one
of the roots. On the other hand in the second term the summation over
$\mathcal{T}$ gives the contributions from the free (rootless)
spanning trees of $\{\lambda^+\}$. These are ,,connected'' to one of
the spanning trees in the spanning forests of $\{\lambda^-\}$ by the
factor $\varphi_{js}$. Since $\{\lambda^+\}$ and $\{\lambda^-\}$ are
disjoint sets, this connection cannot create circles, so the result is
still a spanning tree. Furthermore, since we sum over $s$, this
connection is realized in every possible way, which means that the
second term gives the contribution of every spanning forest of
$\{\lambda\}$ in which $\lambda_j$ is not among the roots. Altogether
the two terms contain the contribution from all of the directed
spanning forests of $\{\lambda\}$, which is just $\det G$ according to
the matrix-tree theorem.

For the off-diagonal elements of the matrix
product \eqref{AGprod} the same steps can be performed. To show that
in this case the result is zero, it is convenient to investigate  those terms
that contain $p'_j$, and afterwards those that do not. The ones that
contain $p_j'$ are 
\begin{equation}
\begin{split}
& p_j'L \sum_{\substack{\{\lambda^+\}\cup\{\lambda^-\}\\ \lambda_{j,k}\in\{\lambda^+\}}}\left(\sum_{\mathcal{T}}  \prod_{l_{pq}\in \mathcal{T}} \varphi_{pq}\right)
 \left( \sum_{\mathcal{F}} \prod_{n\in R(\mathcal{F})} p'_nL
  \prod_{l_{uv}\in \mathcal{F}} \varphi_{uv}\right)- \\
& -\sum_{\substack{s=1\\ s\neq j}}^N\left(\varphi_{js} 
\sum_{\substack{\{\lambda^+\}\cup\{\lambda^-\}\\ \lambda_{k,s}\in\{\lambda^+\} \\ \lambda_j\in\{\lambda^-\}}}\left(\sum_{\mathcal{T}}  \prod_{l_{pq}\in \mathcal{T}} \varphi_{pq}\right)
 \left( \sum_{\mathcal{F}} \prod_{\substack{n\in R(\mathcal{F})\\j\in R(\mathcal{F})}} p'_nL
  \prod_{l_{uv}\in \mathcal{F}} \varphi_{uv}\right)  \right).
\end{split}
\end{equation}

In the second line $j\in R(\mathcal{F})$ denotes that here we consider
only those spanning trees of $\{\lambda^-\}$ in which $\lambda_j$ is
one of the roots. Because of this we can pull out a $p_j'L$ factor
from it. Similarly to the previous argument it can be shown that this
whole expression is zero: in the second line, for every partition the
spanning tree in $\mathcal{F}$ that originates from $\lambda_j$ is
connected to a spanning tree in $\mathcal{T}$ by $\varphi_{js}$, and
the result of this connection is still a spanning tree. Since there is
a summation over $s$, all possible connections are included. This
means that the second line contains all such spanning trees that
include $\lambda_j$ and $\lambda_k$, beside all the spanning forests
of $\{\lambda^-\}$. But the first line is exactly the same, so their
difference is zero. The terms that do not contain $p_j'$ are 
\begin{equation}
\begin{split}
& \sum_{\substack{s=1\\ s\neq j}}^N \left[\varphi_{js} 
\sum_{\substack{\{\lambda^+\}\cup\{\lambda^-\}\\ \lambda_{j,k}\in\{\lambda^+\} \\ \lambda_s\in\{\lambda^-\}}}\left(\sum_{\mathcal{T}}  \prod_{l_{pq}\in \mathcal{T}} \varphi_{pq}\right)
 \left( \sum_{\mathcal{F}} \prod_{n\in R(\mathcal{F})} p'_nL
  \prod_{l_{uv}\in \mathcal{F}} \varphi_{uv}\right) - \right. \\
& \left.-\varphi_{js} 
\sum_{\substack{\{\lambda^+\}\cup\{\lambda^-\}\\ \lambda_{k,s}\in\{\lambda^+\} \\ \lambda_j\in\{\lambda^-\}}}\left(\sum_{\mathcal{T}}  \prod_{l_{pq}\in \mathcal{T}} \varphi_{pq}\right)
 \left( \sum_{\mathcal{F}} \prod_{\substack{n\in R(\mathcal{F})\\j\not\in R(\mathcal{F})}} p'_nL
  \prod_{l_{uv}\in \mathcal{F}} \varphi_{uv}\right) \right].
\end{split}
\end{equation}
Here in the first line $\lambda_j \in \{\lambda^+\}$ and $\lambda_s
\in \{\lambda^-\}$, and they are connected by $\varphi_{js}$, while in
the second line they are reversed. Since we sum up for every $s$ and
every bi-partition (and since $\lambda_j$ cannot be a root in the
second line) the difference is zero. This completes the summation of
the expansion. 

\section{Proof of the singularity properties}

\label{sec:singproof}

Here we present the proof of the singularity property
\eqref{sing_prop}, focusing on the case of the XXZ chain. Following the reasoning in Section \ref{sec:fftcsa2}
it is sufficient to show that \eqref{sing_prop} holds for an operator
which is the product of the elements of the monodromy matrix:
\begin{equation}
  \label{xet}
\ordo=X(\mu_1)\dots X(\mu_M),
\end{equation}
where $X(\mu)$ represents one of the
four operators: $A,B,C,D$. In order to prove \eqref{sing_prop} we are
using the commutation relations of the operators $A,B,C,D$, and the
following singularity property of the scalar products (see Subsection
IX. 3. of \cite{Korepin-Book})
\begin{equation}
\label{sing_prop_scalar}
\bra{0}\prod_{j=1}^N \mathbb{C}(\lambda^C_j)\prod_{j=1}^N
\mathbb{B}(\lambda^B_j)\ket{0}
\xrightarrow{\lambda_N^C\rightarrow\lambda_N^B}\frac{i\sinh(\eta)}{\lambda_N^C-\lambda_N^B}\big(l_N^C-l_N^B\big)
\prod_{k=1}^{N-1}f_{Nk}^Cf_{Nk}^B
\bra{0}\prod_{j=1}^{N-1} \mathbb{C}(\lambda^C_j)\prod_{j=1}^{N-1} \mathbb{B}(\lambda^B_j)\ket{0}^{mod}.
\end{equation} 
Here the elements of the sets $\{\lambda^C\}$ and $\{\lambda^B\}$ not
necessarily satisfy the Bethe equations, and the $\vev{\ }^{mod}$
notation means, that the scalar product is calculated with the
modified vacuum expectation values $a^{mod}$ and $d^{mod}$ defined in
\eqref{admod}.

We stress an important technical detail already at this point: whereas
for the Bethe
states we used the re-normalized creation operators
$\mathbb{B}(\lambda)$ and $\mathbb{C}(\lambda)$, because these are
convenient to study the norms and mean values, for the local
operator $\ordo$ we used the original operators
$A(\lambda),B(\lambda),C(\lambda),D(\lambda)$, because these are
needed for the solution of the inverse problem.

For the sake of completeness we also present here the commutation relations
resulting from the RTT equation \eqref{eq:RTT}:
\begin{equation}
\begin{aligned}
A(\mu)B(\lambda)&=f(\lambda,\mu)B(\lambda)A(\mu)+g(\mu,\lambda)B(\mu)A(\lambda)\\
B(\mu)A(\lambda)&=f(\lambda,\mu)A(\lambda)B(\mu)+g(\mu,\lambda)A(\mu)B(\lambda)\\
D(\lambda)B(\mu)&=f(\lambda,\mu)B(\mu)D(\lambda)+g(\mu,\lambda)B(\lambda)D(\mu)\\
B(\lambda)D(\mu)&=f(\lambda,\mu)D(\mu)B(\lambda)+g(\mu,\lambda)D(\lambda)B(\mu)\\
A(\lambda)C(\mu)&=f(\lambda,\mu)C(\mu)A(\lambda)+g(\mu,\lambda)C(\lambda)A(\mu)\\
C(\lambda)A(\mu)&=f(\lambda,\mu)A(\mu)C(\lambda)+g(\mu,\lambda)C(\mu)A(\lambda)\\
D(\mu)C(\lambda)&=f(\lambda,\mu)C(\lambda)D(\mu)+g(\mu,\lambda)C(\mu)D(\lambda)\\
C(\mu)D(\lambda)&=f(\lambda,\mu)D(\lambda)C(\mu)+g(\mu,\lambda)D(\mu)C(\lambda)\\
[A(\lambda),D(\mu)]&=g(\lambda,\mu)\{C(\mu)B(\lambda)-C(\lambda)B(\mu)\}\\
[D(\lambda),A(\mu)]&=g(\lambda,\mu)\{B(\mu)C(\lambda)-B(\lambda)C(\mu)\}\\
[B(\lambda),C(\mu)]&=g(\lambda,\mu)\{D(\mu)A(\lambda)-D(\lambda)A(\mu)\}\\
[C(\lambda),B(\mu)]&=g(\lambda,\mu)\{A(\mu)D(\lambda)-A(\lambda)D(\mu)\}\\
[A(\lambda),A(\mu)]&=[B(\lambda),B(\mu)]=[C(\lambda),C(\mu)]=[D(\lambda),D(\mu)]=0,
\end{aligned}
\label{komm_2}
\end{equation}
where
\begin{equation}
  f(u,v)=\frac{\sin(u-v+i\eta)}{\sin(u-v)},\qquad g(u,v)=\frac{i\sinh(\eta)}{\sin(u-v)}.
\end{equation}

Instead of considering an arbitrary product in \eqref{xet} it is useful to require a specific ordering of the
$X=A,B,C,D$ operators.
It follows from the commutation relations above that any
operator of the form $X(\mu_1)\dots X(\mu_M)$ can be written as a
linear combination of operators in which the order of the $A, B, C$
and $D$ operators is fixed in the following way: 
\begin{equation}
X(\mu_1)\dots X(\mu_M)=\sum G(\{\mu\})CC\dots CDD\dots DAA\dots ABB\dots B,
\end{equation}
where the $G(\{\mu\})$ are the coefficients of the individual terms,
which will include combinations of the $f(u)$ and $g(u)$ functions.
Because of the linearity of the scalar product, all we need to show is
that \eqref{sing_prop} holds for a single term in this
summation.

We will first consider
products of the form $\ordo=D(\mu_1)D(\mu_2)\dots
D(\mu_m)A(\mu_{m+1})\dots A(\mu_M)$. The possible presence of
additional $C$ and $B$ operators will be treated later.

First let us consider only one $A$ operator:
\begin{equation}
\begin{split}
\bra{0}\prod_{k=1}^N \mathbb{C}(\lambda_k^C)A(\mu)\prod_{k=1}^N \mathbb{B}(\lambda_k^B)\ket{0}&=a(\mu)\prod_{k=1}^Nf(\lambda_k^B,\mu)\bra{0}\prod_{k=1}^N \mathbb{C}(\lambda_k^C)\prod_{k=1}^N \mathbb{B}(\lambda_k^B)\ket{0}+\\
&+\sum_{n=1}^N a(\lambda_n^B) g(\mu,\lambda_n^B) \prod_{\substack{k=1\\k\neq n}}^N f(\lambda_k^B,\lambda_n^B)\frac{d(\mu)}{d(\lambda_n^B)} \bra{0}\prod_{k=1}^N \mathbb{C}(\lambda_k^C) \mathbb{B}(\mu)\prod_{\substack{k=1\\k\neq n}}^N \mathbb{B}(\lambda_k^B)\ket{0}.
\end{split}
\end{equation}
Here we only used the commutation relation between the operators $A$
and $B$. Taking the $\lambda_N^C\rightarrow\lambda_N^B$ limit and
using \eqref{sing_prop_scalar} we arrive at the expected expression: 
\begin{equation}
\begin{split}
\bra{0}\prod_{k=1}^N \mathbb{C}(\lambda_k^C)A(\mu)\prod_{k=1}^N \mathbb{B}(\lambda_k^B)\ket{0}\xrightarrow{\lambda_N^C\rightarrow \lambda_N^B}
\frac{i\sinh(\eta)}{\lambda_N^C-\lambda_N^B}\big(l_N^C-l_N^B\big)\prod_{k=1}^{N-1}f_{Nk}^Cf_{Nk}^B\bra{0}\prod_{j=1}^{N-1} \mathbb{C}(\lambda^C_j)A(\mu)\prod_{j=1}^{N-1} \mathbb{B}(\lambda^B_j)\ket{0}^{mod}.
\end{split}
\end{equation}
The computation goes similarly for the $D$ operator. Now let's consider the case when $\ordo=\prod_{k=1}^M A(\mu_k)$. To calculate the matrix element of this operator, first we have to compute the effect of it on an arbitrary state: 
\begin{equation}
\prod_{l=1}^MA(\mu_l)\prod_{k=1}^N\mathbb{B}(\lambda_k)\ket{0}.
\end{equation}
It is obvious that the result is the linear combination of states with particle number $N$. The rapidities of the particles are coming from the set $\{\mu\}_M\cup\{\lambda\}_N$. The result can be arranged into a sum, depending on the number of rapidities coming from $\{\mu\}_M$. In every term in this summation we have to take into consideration every possible way how that certain amount of rapidities can be substituted from $\{\mu\}_M$. All this means that we can write the result in the following way:
\begin{equation}
\label{prod_AB}
\begin{split}
\prod_{l=1}^MA(\mu_l)\prod_{k=1}^N\mathbb{B}(\lambda_k)\ket{0}=
\sum_{K=0}^{\max(N,M)}
\sum_KG_{l_1,l_2,\dots,l_K}^{n_1,n_2,\dots,n_K}(\{\lambda\}_N|\{\mu\}_M)\prod_{q=1}^K\mathbb{B}(\mu_{lq})\!\!\!\!\!\!\!\!\!\prod_{\substack{k=1\\k\neq n_1,\dots,n_K}}^N\!\!\!\!\!\!\!\!\!\mathbb{B}(\lambda_k)\ket{0},
\end{split}
\end{equation}
where $\sum_K$ goes over every such subset of $\{\lambda\}_N\cup\{\mu\}_M$ that has the following two properties: the number of its elements is $N$ and it has exactly $K$ elements coming from $\{\mu\}_M$. If $\mu_{l_1},\mu_{l_2},\dots,\mu_{l_K}$ denotes these elements and $\lambda_{n_1},\lambda_{n_2},\dots,\lambda_{n_K}$ $\{\lambda\}_N$ those that are not present in the subset, $\sum_K$ can be written out explicitly:
\begin{equation}
\sum_K=\sum_{l_1=1}^M\sum_{\substack{l_2=1\\l_2\neq l_1}}^M\dots\!\!\!\!\sum_{\substack{l_K=1\\l_K\neq l_1,\dots,l_{K-1}}}^M\sum_{n_1=1}^N\sum_{\substack{n_2=1\\n_2\neq n_1}}^N\dots\!\!\!\!\sum_{\substack{n_K=1\\n_K\neq n_1,\dots,n_{K-1}}}^N.
\end{equation}
The $G$ coefficients can be calculated using only the commutation relations. For example in the case $K=0$ none of the rapidities are coming from $\{\mu\}_M$ and we have:
\begin{equation}
G^0_0(\{\lambda\}_N|\{\mu\}_M)=\prod_{l=1}^M\left(a(\mu_l)\prod_{k=1}^Nf(\lambda_k,\mu_l)\right).
\end{equation}
Let's calculate now a general coefficient, where $K$ rapidities are
coming from $\{\mu\}_M$! To do this we have to keep in mind that every
time when we exchange the arguments of the operators $A(\mu)$ and
$\mathbb{B}(\lambda)$ during commutation there will appear a factor
$g(\mu,\lambda)d(\mu)/d(\lambda)$, while if we do not exchange them
the corresponding factor will be $f(\lambda,\mu)$.

Let $\mu_{l_1}$ be the first substituted rapidity and $\lambda_{n_1}$ the one which is replaced! In this case there will be a $g(\mu_{l_1},\lambda_{n_1})d(\mu_{l_1})/d(\lambda_{n_1})$ factor coming from the commutation relation of $A(\mu_{l_1})$ and $\mathbb{B}(\lambda_{n_1})$, describing the replacement. The commutation of $A(\lambda_{n_1})$ and the other $\mathbb{B}$ operators will give the factor $\prod_{\substack{k=1\\k\neq n_1}}f(\lambda_{k},\lambda_{n_1})$. Finally the effect of $A(\lambda_{n_1})$ on the pseudo vacuum will result in the factor $a(\lambda_{n_1})$.
In the case of the next substituted rapidity ($\mu_{l_2}$ replacing
$\lambda_{n_2}$) the same factors will appear, expect that now it has
to be taken into consideration that one of the commutations will be
with $\mathbb{B}(\mu_{l_1})$ and not with
$\mathbb{B}(\lambda_{n_1})$.

The rest of the substitutions go the same way. The remaining $A$
operators with the non-substituted rapidities commute through the
$\mathbb{B}$ operators without exchanging the arguments. To take into
consideration all of the possible cases, we have to sum over every $n$
and $l$. But since both the $A$ and $\mathbb{B}$ operators commute
with each other, the answer does not depend on which $\lambda$ is replaced by which $\mu$. Therefore, to obtain the right result we have to divide by $K!$. This leads to the following result for a generic coefficient:
\begin{equation}
\begin{split}
&G_{l_1,l_2,\dots,l_K}^{n_1,n_2,\dots,n_K}(\{\lambda\}_N|\{\mu\}_M)=\frac{1}{K!}a(\lambda_{n_1})g(\mu_{l_1},\lambda_{n_1})\frac{d(\mu_{l_1})}{d(\lambda_{n_1})}\left(\prod_{\substack{k=1\\k\neq n_1}}^N f(\lambda_{k},\lambda_{n_1})\right) a(\lambda_{n_2})g(\mu_{l_2},\lambda_{n_2})\frac{d(\mu_{l_2})}{d(\lambda_{n_2})}\times\\ 
&\times \left(\prod_{\substack{k=1\\k\neq n_1,n_2}}^N f(\lambda_k,\lambda_{n_2})\right)f(\mu_{l_1},\lambda_{n_2})\ \dots a(\lambda_{n_K})g(\mu_{l_K},\lambda_{n_K})\frac{d(\mu_{l_K})}{d(\lambda_{n_K})}\left(\prod_{\substack{k=1\\k\neq n_1,\dots,n_{K-1}}}^N\!\!\!\!\!\!\!\!f(\lambda_{k},\lambda_{n_K})\right)\times\\
&\times\left(\prod_p^{K-1}f(\mu_{l_p},\lambda_{n_K})\right)
\prod_{\substack{r=1\\r\neq l_1,\dots,l_K}}^M\left(a(\mu_r)\!\!\!\!\!\prod_{\substack{k=1\\k\neq n_1,\dots,n_K}}^N\!\!\!\!\!f(\lambda_k,\mu_r)\prod_{q\in\{l_1,\dots,l_K\}}f(\mu_q,\mu_r)\right).
\end{split}
\end{equation}
Multiplying \eqref{prod_AB} from the left with $\bra{0}\prod_{k=1}^N\mathbb{C}(\lambda_k^C)$ and taking the $\lambda_N^C\rightarrow\lambda_N^B$ limit, in the r.h.s. only those terms will give a contribution to the pole, in which $\lambda_N^B$ is still among the arguments of the $\mathbb{B}$ operators. All these terms will get multiplied by $\frac{i\sinh(\eta)}{\lambda_N^C-\lambda_N^B}\big(l_N^C-l_N^B\big)\prod_{k=1}^{N-1}f_{Nk}^Cf_{Nk}^B$ according to \eqref{sing_prop_scalar}, and the modified scalar products will appear with $N-1$ particles. Since in every such $G(\{\lambda\}_N|\{\mu\}_M)$ coefficient where $N\not\in\{n_1,\dots,n_K\}$ there is a $f(\lambda_N^B,\xi)$ factor next to every vacuum expectation value $a(\xi)$, it is true that:
\begin{equation}
G_{l_1,l_2,\dots,l_K}^{n_1,n_2,\dots,n_K}(\{\lambda\}_N|\{\mu\}_M)=G_{l_1,l_2,\dots,l_K}^{n_1,n_2,\dots,n_K}(\{\lambda\}_{N-1}|\{\mu\}_M)^{mod}\qquad\qquad (N\not\in\{n_1,\dots,n_K\}),
\end{equation} 
which means that \eqref{sing_prop} holds for $\ordo=\prod_{k=1}^M
A(\mu_k)$. This computation goes the same way for the product of $D$
operators (only the order of the arguments of the $g$ and $f$
functions are reversed). But because of the linearity of the scalar
product this also proves that \eqref{sing_prop} is true for an
operator of the form $\ordo=D(\mu_1)D(\mu_2)\dots
D(\mu_m)A(\mu_{m+1})\dots A(\mu_M)$: the effect of the products of the
$A$ operators on an arbitrary state is a linear combination of states
computed above. The effect of the product of the $D$ operators on each
term in this linear combination is another linear combination, with
coefficients that can be similarly calculated. However, if the
original term (the one obtained after acting with the $A$ operators)
doesn't contain $\lambda_N^B$ among the arguments of the $\mathbb{B}$
operators then it won't appear there after acting with the $D$
operators. If it is still an argument of a $\mathbb{B}$ operator in
the original term, than the appropriate $f$ factor is present next to
the v.e.v., and it will appear also after acting with the $D$
operators. This means that in every term that contains
$\mathbb{B}(\lambda_N^B)$ the appropriate $f(\lambda_N,\xi)$ or
$f(\xi,\lambda_N)$ factor will appear next to the v.e.v. $a(\xi)$ or
$d(\xi)$. Therefore \eqref{sing_prop} holds for
$\ordo=D(\mu_1)D(\mu_2)\dots D(\mu_m)A(\mu_{m+1})\dots A(\mu_M)$.

To complete the proof, we consider now the case when there are
additional $C$ and $B$ operators. Since they are on the left and right
hand side, respectively, they can be considered as additional creation
operators in the states. Their presence does not alter the singularity
property \eqref{sing_prop} we intend to prove. To see this, let us
first consider the most simple case when $\ordo=C(\mu^C)B(\mu^B)$. The
singularities of the form factors of this operator follow 
from \eqref{sing_prop_scalar}. First we consider the re-normalized operators:
\begin{equation}
  \label{sing_prop_scalar2}
  \begin{split}
&\bra{0}\left(\prod_{j=1}^N \mathbb{C}(\lambda^C_j)\right)
\mathbb{C}(\mu^C)\mathbb{B}(\mu^B)
\left(\prod_{j=1}^N \mathbb{B}(\lambda^B_j)\right)\ket{0}
\xrightarrow{\lambda_N^C\rightarrow\lambda_N^B}\\
&\frac{i\sinh(\eta)}{\lambda_N^C-\lambda_N^B}\big(l_N^C-l_N^B\big)
\left(\prod_{k=1}^{N-1}f_{Nk}^Cf_{Nk}^B\right) f(\lambda_N,\mu^C) f(\lambda_N,\mu^B)
  \bra{0}
\left(\prod_{j=1}^{N-1}\mathbb{C}(\lambda^C_j)\right)
\mathbb{C}(\mu^C)\mathbb{B}(\mu^B)
\left(\prod_{j=1}^{N-1}\mathbb{B}(\lambda^B_j)\right)\ket{0}^{mod}.
  \end{split}
\end{equation} 
Substituting the definition \eqref{renorm} and also using the
modification rule \eqref{admod} for the re-normalization on the
r.h.s. we get
\begin{equation}
  \label{sing_prop_scalar3}
  \begin{split}
\bra{0}\left(\prod_{j=1}^N \mathbb{C}(\lambda^C_j)\right)
&C(\mu^C)B(\mu^B)
\left(\prod_{j=1}^N \mathbb{B}(\lambda^B_j)\right)\ket{0}
\xrightarrow{\lambda_N^C\rightarrow\lambda_N^B}\\
&\frac{i\sinh(\eta)}{\lambda_N^C-\lambda_N^B}\big(l_N^C-l_N^B\big)
\left(\prod_{k=1}^{N-1}f_{Nk}^Cf_{Nk}^B\right) 
  \bra{0}
\left(\prod_{j=1}^{N-1}\mathbb{C}(\lambda^C_j)\right)
C(\mu^C)B(\mu^B)
\left(\prod_{j=1}^{N-1}\mathbb{B}(\lambda^B_j)\right)\ket{0}^{mod},
  \end{split}
\end{equation}
which has just the desired form. It can be argued similarly, that the
presence of the $C$ and $B$ operators does not alter the singularity
properties when there are $D$ and $A$ operators present.

\end{widetext}

\addcontentsline{toc}{section}{References}
%apsrev4-2.bst 2019-01-14 (MD) hand-edited version of apsrev4-1.bst
%Control: key (0)
%Control: author (8) initials jnrlst
%Control: editor formatted (1) identically to author
%Control: production of article title (0) allowed
%Control: page (0) single
%Control: year (1) truncated
%Control: production of eprint (0) enabled
%

% \bibliography{../../../pozsi-general}

\end{document}